\newtheorem{theorem}{Theorem}
\newtheorem{lemma}[theorem]{Lemma}
\newtheorem{proposition}[theorem]{Proposition}
\newtheorem{example}{Example}
\crefname{theorem}{Theorem}{Theorems}
\crefname{example}{Example}{Examples}
\crefname{observation}{Observation}{Observations}
\crefname{remark}{Remark}{Remarks}
\crefname{proposition}{Proposition}{Propositions}
\crefname{lemma}{Lemma}{Lemmas}
\crefname{corollary}{Corollary}{Corollaries}
\crefname{fact}{Fact}{Facts}
\crefname{algocf}{Algorithm}{Algorithms}	
\crefname{table}{Table}{Tables}	
\crefname{figure}{Fig.}{Figs.}
\crefname{algorithm}{Algorithm}{Algorithms}
\crefname{section}{Section}{Sections}
\newcommand{\bb}{\mathbf{b}}
\newcommand{\rev}[1]{{\color{black}{#1}}}
\newcommand{\revv}[1]{{\color{black}{#1}}}
\newcommand{\revvv}[1]{{\color{black}{#1}}}
\newcommand{\red}[1]{{\color{red}{#1}}}
\newcommand{\PSUMR}{{PSUM-R}\xspace}
\newcommand{\LPoR}{{LPoR}\xspace}
\newcommand{\LPdR}{{LPdR}\xspace}
\newcommand{\SOTA}{{SOTA}\xspace}
\newcommand{\MBLP}{{MBLP}\xspace}
\newcommand{\EXACT}{{MBLP-S}\xspace}
\newcommand{\NS}{{NS}\xspace}
\newcommand{\FP}{{FP}\xspace}
\newcommand{\TR}{{TR}\xspace}
\newcommand{\CG}{{CG}\xspace}
\newcommand{\CBD}{{CBD}\xspace}
\newcommand{\CBDs}{{CBDs}\xspace}
\newcommand{\VNE}{{VNE}\xspace}
\newcommand{\IterMax}{\text{IterMax}\xspace}
\newcommand{\LP}{{\text{LP}}\xspace}
\newcommand{\LIN}{{\text{L}}\xspace}
\DeclareMathOperator*{\minimize}{minimize}
\newcommand{\network}{fish-20-9}
\newcommand{\CL}{\mathcal{L}}
\newcommand{\V}{\mathcal{V}}
\newcommand{\I}{\mathcal{I}}
\newcommand{\F}{\mathcal{F}}
\newcommand{\K}{\mathcal{K}}
\newcommand{\X}{\mathcal{X}}
\def\x{\boldsymbol{x}}
\def\y{\boldsymbol{y}}
\def\z{\boldsymbol{z}}
\def\w{\boldsymbol{w}}
\def\r{\boldsymbol{r}}
\def\d{\boldsymbol{d}}
\def\bb{\boldsymbol{b}}
\def\a{\boldsymbol{\alpha}}
\def\b{\boldsymbol{\beta}}
\definecolor{ududff}{rgb}{0.30196078431372547,0.30196078431372547,1.}
\newcommand\Label[1]{&\refstepcounter{equation}(\theequation)\ltx@label{#1}&}
\newcommand{\figuresize}{1.7in}
\begin{document}
	\setlength{\jot}{0.09cm}

\title{
	\rev{An Efficient Benders Decomposition Approach for Optimal Large-Scale Network Slicing} \thanks{Part of this work \cite{Chen2023a} has been presented at the 24th IEEE International Workshop on Signal Processing Advances in Wireless Communications (SPAWC), Shanghai, China, September 25-28, 2023.}
}
\author{\IEEEauthorblockN{Wei-Kun Chen, Zheyu Wu, Rui-Jin Zhang, Ya-Feng Liu,  Yu-Hong Dai, and Zhi-Quan Luo}
	\thanks{
	\revvv{The work of W.-K. Chen was supported in part by National Natural Science Foundation of China (NSFC) under Grant 12101048. 
	The work of Z. Wu and Y.-F. Liu was supported in part by the NSFC under Grant 12371314 and Grant 11991020.
	The work of Y.-H. Dai was supported in part by the NSFC under Grant 11991021 and Grant 12021001.
	The work of Z.-Q. Luo was supported by the Guangdong Major Project of Basic and Applied Basic Research (No. 2023B0303000001), the Guangdong Provincial Key Laboratory of Big Data Computing, and the National Key Research and Development Project under grant 2022YFA1003900.
	(\emph{Corresponding author: Ya-Feng Liu.})}
	}
	\thanks{
		W.-K. Chen is with the School of Mathematics and Statistics/Beijing Key Laboratory on MCAACI, Beijing Institute of Technology, Beijing 100081, China (e-mail: chenweikun@bit.edu.cn).
		%
		Z. Wu, R.-J. Zhang, Y.-F. Liu, and Y.-H. Dai are with the State Key Laboratory
		of Scientific and Engineering Computing, Institute of Computational Mathematics and Scientific/Engineering Computing, Academy of Mathematics and Systems Science, Chinese Academy of Sciences, Beijing 100190, China (e-mail: \{wuzy, zhangruijin, yafliu, dyh\}@lsec.cc.ac.cn).
		%
		Z.-Q. Luo is with the Shenzhen Research Institute of Big Data and The Chinese University of Hong Kong, Shenzhen 518172, China (e-mail: luozq@cuhk.edu.cn)
	}
}

\maketitle

\begin{abstract}
	This paper considers the network slicing (\NS) problem which attempts to map multiple customized virtual network requests to a common shared network infrastructure and allocate network resources to meet diverse service requirements.
	This paper proposes an efficient \rev{customized Benders} decomposition algorithm for globally  solving  the large-scale NP-hard \NS problem.
	The proposed algorithm decomposes the hard \NS problem into two relatively easy function placement (\FP) and traffic routing (\TR) subproblems and iteratively solves them enabling the information feedback between each other, which makes it particularly suitable to solve large-scale problems.
	Specifically, the \FP subproblem is to place service functions into cloud nodes  in the network, and solving it can return a function placement strategy based on which the \TR subproblem is defined;
	and the \TR subproblem is to find paths connecting two nodes hosting two adjacent functions in the network, and solving it can either verify that the solution of the \FP subproblem {is} an optimal solution of the original problem, or return a valid inequality to the \FP subproblem that cuts off the current infeasible solution.
	The proposed algorithm is guaranteed to find the \rev{globally optimal} solution of the \NS problem.
	By taking the special structure of the \NS problem into consideration, we successfully develop two families of valid inequalities that render the proposed algorithm converge much more quickly and thus much more efficient.
	\revv{Numerical results demonstrate that the proposed valid inequalities effectively accelerate the convergence of the decomposition algorithm, and the proposed algorithm significantly outperforms the existing algorithms in terms of both solution efficiency and quality.}
\end{abstract}
\begin{IEEEkeywords}
	\rev{Benders decomposition}, Farkas' Lemma, network slicing, resource allocation, valid inequality.
\end{IEEEkeywords}

\section{Introduction}
\label{sec:intro}

The fifth generation (5G) and the future sixth generation (6G) networks are expected to simultaneously support multiple services with diverse requirements such as peak data rate, latency, reliability, and energy efficiency \cite{Vassilaras2017}.
In traditional networks, service requests (consisting of a prespecified sequence of service functions) are implemented by dedicated hardware in fixed locations, which is inflexible and inefficient \cite{Mirjalily2018}.
In order to improve the resource provision flexibility and efficiency of the network, a key enabling technology called network function virtualization (NFV) has been proposed \cite{Mijumbi2016}.
In contrast to traditional networks, NFV-enabled networks can efficiently leverage virtualization technologies to configure some specific cloud nodes in the network to process service functions on-demand, and establish customized virtual networks for all services.
However, since the functions of all services are implemented over a single shared network infrastructure, it is crucial to efficiently allocate network (e.g., cloud and communication) resources to meet the diverse service requirements, subject to the capacity constraints of all cloud nodes and links in the network.
In the literature, the above resource allocation problem is called \emph{network slicing} (NS).


\subsection{Literature Review} 
Various algorithms have been proposed in the literature to solve the \NS problem or its variants; see \cite{Jarray2012}-\cite{Promwongsa2020} and the references therein.
To streamline the literature review, we classify the existing approaches into the following two categories: (i) exact algorithms that find an optimal solution for the problem and (ii) heuristic algorithms that aim to find a (high-quality) feasible solution for the problem.

In particular, the works \cite{Jarray2012}-\cite{Domenico2020} proposed the link-based mixed binary linear programming (\MBLP) formulations for the \NS problem and employed standard \MBLP solvers like CPLEX to solve their problem formulations to find an optimal solution.
However, when applied to solve large-scale \NS problems, the above approaches generally suffer from low computational efficiency due to the large search space and problem size. 
The works \cite{Jarray2015}-\cite{Yang2021} proposed the path-based \MBLP formulations for the \NS problems. 
However, due to the exponential number of path variables, the state-of-the-art (\SOTA) \MBLP solvers cannot solve these formulations efficiently.

To quickly obtain a feasible solution of the \NS problem, various heuristic and meta-heuristic algorithms have also been proposed in the literature.
In particular, 
\revv{the works} \cite{Chowdhury2012}-\cite{Luizelli2015} proposed two-stage heuristic algorithms that first decompose the link-based \MBLP formulation of the \NS problem into a function placement (\FP) subproblem and a traffic routing (\TR) subproblem and then solve each subproblem in a one-shot fashion.
The \FP and \TR subproblems attempt to map service functions into cloud nodes
and find paths connecting two nodes hosting two adjacent functions in the network, respectively.
To solve the \FP subproblem, \rev{the work \cite{Chowdhury2012}} first solved the linear programming (LP) relaxation of the \NS problem and \rev{then used a one-shot rounding strategy; the work \cite{Chen2023} used an  LP dynamic rounding  algorithm that sequentially rounds the value of a variable {so that its value} is consistent to other already rounded variables; the work \cite{Zhang2017} first solved a series of \LP relaxation problems to find a near integral solution and then used a one-shot rounding  strategy; and the works \cite{Yu2008}-\cite{Luizelli2015} used some greedy heuristics (without solving any LP).}
%
Once a solution for the mapping of service functions and cloud nodes is obtained, the \TR subproblem can be solved by using shortest path, $k$-shortest path, or multicommodity flow algorithms.
Unfortunately, solving the \FP subproblem in a heuristic manner without taking the \TR subproblem into account can lead to infeasibility or low-quality solutions.
The work \cite{Mohammadkhan2015} divided the services into several groups to obtain several ``small'' subproblems, and sequentially solved the subproblems using standard \MBLP solvers. 
Similarly, as this approach failed to simultaneously take the function placement and traffic routing of all services into account, it is likely to lead to infeasibility or return low-quality solutions. 
The works \cite{Jarray2015}-\cite{Gupta2018} proposed column
generation (\CG) approaches \cite{Conforti2014} to solve the path-based \MBLP formulation of the \NS problem.
Unfortunately, only a subset of the paths is considered for the traffic routing of flows in the \CG approach, which generally leads to infeasibility or low-quality solutions.
Meta-heuristic algorithms were also developed to obtain a feasible solution for the \NS problem including particle swarm optimization \cite{Zhang2012}, simulated annealing \cite{Li2015}, and  Tabu search \cite{Abu-Lebdeh2017,Promwongsa2020}.

In summary, the existing approaches to solving the \NS problem either suffer from low efficiency (as they need to call \MBLP solvers to solve \MBLP formulations with a large problem size) or return a low-quality solution (as they fail to take the information of the whole problem into consideration). 
\rev{The main motivation of this paper is to develop an \emph{efficient} algorithm that takes the information of the whole problem into consideration and finds a \emph{globally optimal} solution for the \emph{large-scale} \NS problem.}
\revv{Integer programming and Benders decomposition techniques \cite{Conforti2014} play vital roles in the developed algorithm.}

\subsection{Our contributions}

\rev{The main contribution of this paper is the proposed customized Benders decomposition \revvv{(\CBD)} algorithm for solving large-scale NS problems,}
{which decomposes the hard large-scale \NS problem into two relatively easy \FP and \TR subproblems and iteratively
solves them enabling the information feedback between each other.
Two key features of the proposed algorithm are: 
(i) It is guaranteed to find an optimal solution of the NS problem if the number of iterations between the \FP and \TR subproblems is allowed to be sufficiently large. 
Even though the number of iterations between the subproblems is small (e.g., 5), the proposed algorithm can still return a much better solution than existing two-stage heuristic algorithms \cite{Chowdhury2012}-\cite{Luizelli2015}.
(ii) The two subproblems in the proposed decomposition algorithm are much easier to solve than the original problem, 
which renders it particularly suitable to solve large-scale \NS problems, compared with existing algorithms based on LP relaxations \cite{Chowdhury2012,Chen2023,Zhang2017} and algorithms needed to call \MBLP solvers \cite{Chen2021}.    

Two technical contributions of the paper are summarized as follows.}
\begin{itemize}
	\item Different from existing two-stage heuristic algorithms \cite{Chowdhury2012}-\cite{Luizelli2015} that solve the \FP  and \TR subproblems in a one-shot fashion, our proposed decomposition algorithm solves the two subproblems in an iterative fashion enabling the information feedback between them.
	The information feedback between the two subproblems plays a central role in the \rev{proposed} algorithm, which is the first technical contribution of the paper. 
	In particular, solving the \FP subproblem can return a function placement strategy based on which the TR subproblem is defined; 
and solving the \TR subproblem can either verify that the solution of the \FP subproblem {is} an optimal solution of the original problem or return a valid inequality to the \FP subproblem that cuts off the current infeasible solution by virtue of Farkas' lemma.
\item We propose two families of valid inequalities that judiciously exploit and utilize the special structures of the \NS problem into consideration
such as the connectivity and limited link capacity structures of the network.
The proposed inequalities, as the second technical contribution of this paper,  can be directly added in the \FP subproblem, thereby significantly reducing the gap between the \FP problem and the \NS problem.
Consequently, the \rev{proposed algorithm} with these valid inequalities converges much faster and thus is much more efficient than that without the corresponding inequalities.
\end{itemize}

Numerical results demonstrate that the proposed valid inequalities significantly reduce the gap between the \FP and \NS problems and  accelerate  the convergence of the decomposition algorithm; 
and the proposed decomposition algorithm significantly outperforms the existing \SOTA algorithms  \cite{Chen2021,Chowdhury2012,Chen2023,Zhang2017} in terms of both solution efficiency and quality.

In our prior work \cite{Chen2023a}, we presented a \rev{\CBD} framework for solving the \NS problem. 
This paper is a significant extension of \cite{Chen2023a} towards a much more efficient \rev{\CBD} algorithm.
In particular, we develop two families of valid inequalities to accelerate the convergence of the \rev{\CBD} algorithm.
Moreover, we provide detailed computational results showing the effectiveness of the developed valid inequalities in reducing the gap and improving the performance of the \rev{\CBD} algorithm. 
The development of valid inequalities and related simulation results are completely new compared to our prior work.

The rest of the paper is organized as follows. 
\cref{sec:modelformulation} introduces the system model and
mathematical formulation for the \NS problem. 
\cref{sect:TSD,sect:ineq} present the \rev{\CBD} algorithm and the valid inequalities to speed up the algorithm, respectively.
\cref{sect:numres} reports the numerical results. 
Finally, \cref{sect:conclusion} draws the conclusion.

\section{System Model and Problem Formulation}
\label{sec:modelformulation}

Let $\mathcal{G}=\{\mathcal{I},\mathcal{L}\}$ denote the substrate (directed) network, where $\mathcal{I}=\{i\}$ and $\mathcal{L}=\{(i,j)\}$ are the sets of nodes and links, respectively. 
Let $ \mathcal{V} \subseteq \mathcal{I}$ be the set of cloud nodes that can process functions.
As assumed in \cite{Zhang2017}, processing one unit of data rate consumes one unit of (normalized) computational capacity, and the total data rate on cloud node $ v \in \V$ must be upper bounded by its capacity $\mu_v$.
Similarly, each link  $ (i,j) \in \CL $ has a communication capacity $ C_{ij} $.
There is a set of services $\mathcal{K}=\{k\}$ that need to be supported by the network.
Each service $ k\in \mathcal{K} $ relates to a customized service function chain (SFC) consisting of $ \ell_k $ service functions that have to be processed in sequence by the network: $f_{1}^k\rightarrow f_{2}^k\rightarrow \cdots \rightarrow f_{\ell_k}^k$ \cite{Zhang2013}.
\rev{If a function is processed by multiple nodes, it may introduce the overhead of maintaining the network state consistency between different parts of the function \cite{Paschos2018,Nguyen2022} (as they need to communicate with each other to exchange data or synchronize their states). 
In order to reduce the  coordination overhead, we follow \cite{Zhang2017,Woldeyohannes2018,Paschos2018,Nguyen2022} to assume that each function must be processed at exactly one cloud node.}
For service $ k $, let $S^k, D^k\notin \mathcal{V}$ denote the source and destination nodes, and let $ \lambda^k_0 $ and $ \lambda^k_s $, $s\in \F^k :=\{1, 2, \ldots, \ell_k\}$, denote the data rates before receiving any function and after receiving {function} $ f^k_s $, respectively.
\rev{In this paper, we assume that the service functions $\{f_s^k\}$ and their data rates $\{\lambda_s^k\}$ are known before mapping them into the substrate network, which is a common assumption in the literature \cite{Vassilaras2017,Liu2017,Woldeyohannes2018}.}

The \NS problem is to determine the function placement, the routes, and the associated data rates on the corresponding routes of all services while satisfying the capacity constraints of all cloud nodes and links.
\revv{Next, we shall introduce the problem formulation.}
\\[2pt]
{\bf\noindent$\bullet$ Function Placement\\[2pt]}
\indent Let $x^{k,s}_v=1$ indicate that function $f^k_s$ is processed by cloud node $v$; otherwise, $x^{k,s}_v=0$.
The following constraint \eqref{onlyonenode} ensures that each function $f_s^k$ must be processed by exactly one cloud node:
\begin{eqnarray}
\label{onlyonenode}
\sum_{v\in \mathcal{V}}x^{k,s}_v=1,~\forall~s\in  \mathcal{F}^k,~\forall ~k \in \mathcal{K}.
\end{eqnarray}
Let $y_v\in \{0,1\}$ denote whether or not cloud node $v$ is activated and powered on.
By definition, we have 
\begin{equation}
\label{xyxelation}
x^{k,s}_v \leq y_v, ~ \forall~v \in \mathcal{V},~\forall~s \in \mathcal{F}^k,~\forall~k \in \mathcal{K}.
\end{equation}
The total data rates on cloud node $v$ is upper bounded by $\mu_v$:
\begin{equation}
\label{nodecapcons}
\sum_{k\in \mathcal{K}}\sum_{s \in \F^k}\lambda_s^k x^{k,s}_v\leq \mu_v y_v,~\forall~ v \in \mathcal{V}.
\end{equation}
{\bf\noindent$\bullet$ Traffic Routing\vspace{0.1cm}\\}
\indent We use $ (k,s) $, $1 \leq s < \ell_k$, to denote the traffic flow which is routed between the two cloud nodes hosting the two adjacent functions $ f_s^k $ and $ f_{s+1}^k $ (with the \rev{data} rate being $\lambda_{s}^k$). 
Similarly, $(k,0)$ denotes the traffic flow which is routed between the source $S^k$ and the cloud node hosting function $f_1^k$ (with the \rev{data} rate being $\lambda_{0}^k$)\rev{;}
$(k,\ell_k)$ denotes the traffic flow which is routed between the cloud node hosting function $f_{\ell_k}^k$ and the destination $D^k$ (with the date rate being $\lambda_{\ell_k}^k$).
Let $ r_{ij}^{k,s} $ be the fraction of \rev{data} rate $\lambda_{s}^k$ of flow $(k,s)$ on link $(i,j)$.
Then the link capacity constraints can be written as follows:
\begin{equation}
	\label{linkcapcons}
	\sum_{k \in \mathcal{K}} \sum_{s\in \mathcal{F}^k \cup \{0\}}\lambda_{s}^k r_{ij}^{k,s} \leq C_{ij}, ~  \forall~(i,j) \in \mathcal{L}.
\end{equation}

To ensure that all functions of each flow $k$ are processed in the predetermined order $f_{1}^k\rightarrow f_{2}^k\rightarrow \cdots \rightarrow f_{\ell_k}^k$, we need the flow conservation constraint \cite{Chen2023}:
\begin{align}
	& \sum_{j: (j,i) \in \mathcal{{L}}} r_{ji}^{k,s} - \sum_{j: (i,j) \in \mathcal{{L}}} r_{ij}^{k,s} =b_i^{k,s}(\x),\nonumber\\[-2pt]
	& \qquad \qquad\qquad\qquad \forall~ i \in \mathcal{I},~\forall~s \in \mathcal{F}^k\cup \{0\}, ~\forall~k \in \mathcal{K},\label{SFC}
\end{align}
where 
\begin{equation*}
	b_i^{k,s}(\x)=\left\{\begin{array}{ll}
		-1,&\text{if~}s=0~\text{and}~i= S^k;\\[3pt]
		x^{k,s+1}_{i},&\text{if~}s=0~\text{and}~i\in \mathcal{V};\\[3pt]
		x_i^{k,s+1}-x_i^{k,s},&\text{if~}1\leq s< \ell_k
		~\text{and}~i\in \mathcal{V};\\[3pt]
		-x_i^{k,s},&\text{if~}s=\ell_k
		~\text{and}~i\in \mathcal{V};\\[3pt]
		1,&\text{if~}s=\ell_k
		~\text{and}~i=D^k;\\[3pt]
		0,& \text{otherwise}.
	\end{array}
	\right.
\end{equation*}
Note that the right-hand side of the above flow conservation constraint \eqref{SFC} depends on the type of node $i$  (e.g., a source node, a destination node, an activated/inactivated cloud node, or an intermediate node) and the traffic flow $(k,s)$.
Let us look into the second case where $b_{i}^{k,s}(\x)=x_{i}^{k,1}$: if $x_i^{k,1}=0$, then \eqref{SFC} reduces to the classical flow conservation constraint; if $x_i^{k,1}=1,$ then \eqref{SFC} reduces to 
$$ \sum_{j: (j,i) \in \mathcal{{L}}} r_{ji}^{k,0} - \sum_{j: (i,j) \in \mathcal{{L}}} r_{ij}^{k,0} = 1,$$
which enforces that the difference of the data rates of flow $(k,0)$ coming into node $i$ and going out of node $i$ should be equal to $\lambda_0^k$.\\[2pt]  
{\bf\noindent$\bullet$ Problem Formulation\\[2pt]}
\indent The power consumption of a cloud node is the combination of the static power
consumption and the dynamic load-dependent power consumption
(that increases linearly with the load) \cite{3gpp}.
The \NS problem is to minimize the total power
consumption of the whole network:
\begin{equation}
	\label{ns}
	\tag{NS}
\begin{aligned} \revv{\minimize_{\substack{\boldsymbol{x},\,\boldsymbol{y},\,\boldsymbol{r}}}}~&  \sum_{v \in \mathcal{V}}p_v y_v + \sum_{v\in\mathcal{V}} \sum_{k \in \mathcal{K}} \sum_{s \in \F^k} c^{k,s}_v x^{k,s}_v \\
{\revv{\text{subject to}}~}&  \eqref{onlyonenode}\text{--}\eqref{rbounds},  \\
\end{aligned}
\end{equation}
where 
\begin{align}
	& y_v\in\{0,1\},~\forall~v\in\mathcal{{V}},\label{ybounds}\\
	&  x^{k,s}_v \in \{0,1\},~\forall ~v\in\mathcal{{V}},~\forall~s\in \F^k,~\forall~k\in\mathcal{K},\label{xbounds} \\
	& r_{ij}^{k,s} \geq 0,~\forall~(i,j)\in \mathcal{L},~\forall~s\in \F^k\cup \{0\},~\forall~k\in \mathcal{K}.\label{rbounds}
\end{align}
Here $p_v$ is the power consumption of cloud node $v$ (if it is activated) and $c_v^{k,s}$ is the power consumption of placing function $f_s^k$ into cloud node $v$. 

Problem \eqref{ns} is an \MBLP problem \cite{Zhang2017} and therefore can be solved to global optimality using the \SOTA~\MBLP solvers, e.g.,  Gurobi, CPLEX, and SCIP.
However, due to the intrinsic (strong) NP-hardness  \cite{Zhang2017}, and particularly the large problem size,
the above approach cannot solve large-scale \NS problems efficiently.
In the next section, we shall develop an efficient decomposition algorithm for solving large-scale \NS problems.

\section{An Efficient  \rev{\CBD} Algorithm}
\label{sect:TSD}

\rev{Benders decomposition is an algorithmic framework that can globally solve large-scale mixed integer linear programming problems \cite{Conforti2014}. Its basic idea is to decompose the original problem into two simpler problems, called \emph{Benders master} and \emph{subproblems}, and iteratively solve the two subproblems until an optimal solution of the original problem is found.
{In this section}, we shall develop an efficient \CBD algorithm for solving the large-scale \NS problem based on its special decomposition structure.
We shall also provide some analysis results on the proposed algorithm.}

\subsection{Proposed Algorithm}
%

Before going into details, we \rev{first} give a high-level preview of the proposed decomposition algorithm
\rev{for solving problem \eqref{ns}. From \cref{sec:modelformulation}, \revv{we can see the decomposition structure of problem  \eqref{ns}.} More specifically,}
problem \eqref{ns} is a combination of the \FP subproblem and the \TR subproblem and the two subproblems are deeply interwoven with each other via constraints \rev{in} \eqref{SFC}, where the placement variable $\x$ and the traffic routing variable $\boldsymbol{r}$ are coupled.
The \FP subproblem addresses the placement of functions into cloud nodes in the network while the \TR subproblem addresses the traffic routing of all pairs of two cloud nodes hosting two adjacent functions.
The basic idea of the proposed decomposition algorithm is \rev{to
first \revv{exploit the decomposition structure of problem \eqref{ns}} and}
decompose the ``hard'' problem \eqref{ns} into two relatively ``easy'' subproblems, \rev{and then to} iteratively solve them with a useful information feedback between the subproblems (to deal with the coupled constraints) until an optimal solution of problem \eqref{ns} is found.
\rev{The \FP and \TR subproblems {correspond} to the Benders master and subproblems, respectively.} 
In particular, solving the \FP subproblem can provide a solution $\bar{\x}$ to the \TR subproblem based on which the \TR subproblem is defined; and 
solving the \TR subproblem can either verify that  $\bar{\x}$ is an optimal solution of problem \eqref{ns},
or return a valid inequality of the form $\boldsymbol{\pi}^\top \x \geq \pi_0$ \rev{(called \emph{Benders cut})} for the \FP subproblem that cuts off the current infeasible solution $\bar{\x}$. 
\rev{Here, $\boldsymbol{\pi}$ and $\pi_0$ are the coefficient vector of the variable $\x$ and the right-hand side of the valid inequality, respectively, which can be obtained by solving the \TR subproblem.}
The information feedback between the \FP and \TR subproblems in the \rev{proposed} algorithm is illustrated in \cref{ideaTSD}. 
In the following, we shall present \rev{the algorithm} in detail.

\begin{figure}\centering
	\begin{tikzpicture} [->,>=stealth',shorten >=0pt,auto,node distance=3.5cm,
		semithick]
	\tikzstyle{arrow}=[->,>=stealth, thick]
	\tikzstyle{juxing1} = [rounded corners,inner sep = 2pt,draw,rectangle,line width = 1pt,fill = white,minimum width = 2.4cm, minimum height = 1cm]
	\tikzstyle{juxing2} = [rounded corners,minimum height=1.9cm,draw,rectangle,line width = 1pt,fill = white,minimum width = 13.5cm, style = dotted]
	\tikzstyle{juxing3} = [rounded corners,minimum height = 1.9cm,draw,rectangle,line width = 1pt,fill = white,minimum width = 3.9cm, style = dotted]
		\node (1) [juxing1,align=center,fill=blue!20] at (0,0) {FP subproblem};
		\node (3) [juxing1,right  = 4cm of 1,align=center,fill=blue!20] {\TR subproblem}; 
		\path
		(1) edge[bend left]              node {\makecell{A solution $\bar{\x}$ based on which\\ the \TR subproblem is defined}} (3)
		(3) edge[bend left,below]              node {\makecell{A certificate that $\bar{\x}$ {is} an optimal solution of \eqref{ns},\\ or a valid inequality $\boldsymbol{\pi}^\top \x \geq \pi_0$ that cuts off $\bar{\x}$}} (1);
\end{tikzpicture}
\caption{The information feedback between the \FP and \TR subproblems in the decomposition algorithm.}
\label{ideaTSD}
\end{figure}
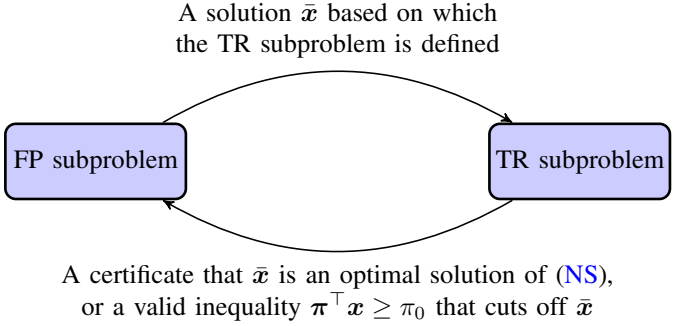

\subsubsection{\FP subproblem}
The \FP subproblem addresses the placement of functions into cloud nodes in the network, and can be presented as 
\begin{equation}
	\label{vnf}
	\tag{FP}
	\begin{aligned} \revv{\minimize_{\substack{\x,\,\y}}}~&  \sum_{v \in \mathcal{V}}p_v y_v + \sum_{v\in\mathcal{V}} \sum_{k \in \mathcal{K}} \sum_{s \in \F^k} c^{k,s}_v x^{k,s}_v  \\
	{\revv{\text{subject to}}~}&  \eqref{onlyonenode}\text{--}\eqref{nodecapcons},~\eqref{ybounds}\text{--}\eqref{xbounds}.
	\end{aligned}
\end{equation}
Problem \eqref{vnf} is an \MBLP problem and is a relaxation of  problem \eqref{ns} 
in which all traffic routing related variables (i.e., $\r$) and constraints (i.e., \eqref{linkcapcons}, \eqref{SFC}, and \eqref{rbounds}) are dropped. 
Note that the dimension of problem \eqref{vnf} is significantly smaller than that of problem \eqref{ns}\rev{; see next subsection for a detailed discussion.} 
Therefore, solving problem \eqref{vnf} by the \SOTA~\MBLP solvers is significantly more efficient than solving problem \eqref{ns}.
To determine whether a 
solution $(\bar{\x}, \bar{\y})$ of problem \eqref{vnf} can define an optimal solution of problem \eqref{ns}, i.e., whether there exists a point $\bar{\r}$ such that $(\bar{\x}, \bar{\y}, \bar{\r})$ is a feasible solution of problem \eqref{ns}, we need to solve the \TR subproblem.

\subsubsection{\TR subproblem}
Given a solution $(\bar{\x}, \bar{\y})$, the \TR subproblem aims to find a traffic routing strategy to route the traffic flow $(k,s)$ for all $k \in \K$ and $s \in \F(k)\cup \{0\}$ in the network, or prove that no such strategy exists.
This is equivalent to determining whether  
the linear system defined by \eqref{linkcapcons}, \eqref{SFC} with $\x= \bar{\x}$, and \eqref{rbounds}, i.e., 
\begin{equation}
	\tag{TR}
	\label{tr}
	\begin{aligned}
		& \sum_{k \in \mathcal{K}} \sum_{s\in \mathcal{F}^k \cup \{0\}}\lambda_{s}^k r_{ij}^{k,s} \leq C_{ij}, ~  \forall~(i,j) \in \mathcal{L}, \\
		& \!\!\!\!\sum_{j: (j,i) \in \mathcal{{L}}} r_{ji}^{k,s} - \sum_{j: (i,j) \in \mathcal{{L}}} r_{ij}^{k,s} =b_i^{k,s}(\bar{\x}),\\
		& \qquad \qquad\qquad\quad \forall~ i \in \mathcal{I},~\forall~s \in \mathcal{F}^k\cup \{0\},~\forall~k \in \mathcal{K},\\
		& r_{ij}^{k,s} \geq 0,~\forall~(i,j)\in \mathcal{L},~\forall~s\in \F^k\cup \{0\},~\forall~k\in \mathcal{K},
	\end{aligned}
\end{equation}
has a feasible solution $\bar{\r}$. 
If problem \eqref{tr} has a feasible solution $\bar{\r}$, then $(\bar{\x}, \bar{\y}, \bar{\r})$ is an optimal solution of problem \eqref{ns}; otherwise, $(\bar{\x}, \bar{\y})$ cannot define a feasible solution of problem \eqref{ns}
and needs to be refined based on the information feedback from solving problem \eqref{tr}.

\subsubsection{Information feedback}
To enable the information feedback between the \FP and \TR subproblems and develop the decomposition algorithm for solving problem \eqref{ns}, we need the following Farkas' Lemma \cite[Chapter 3.2]{Conforti2014}.
\begin{lemma}[Farkas' Lemma]\label{fakas}
	The linear system $A\r \leq \bb$, $C\r=\d$, and $\r \geq 0$ has a feasible solution $\r$ if and only if $\a^\top \bb + \b^\top \d  \geq 0$ for all $(\a,\b)$ satisfying $\a^\top A + \b^\top C \geq 0$ and $\a \geq 0$. 
\end{lemma}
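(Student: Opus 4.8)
The plan is to prove the two directions of the equivalence separately, relying on the fundamental theorem of linear inequalities or, more conveniently, on the strong duality theorem of linear programming. The statement asserts that the system $A\r \leq \bb$, $C\r = \d$, $\r \geq 0$ is feasible if and only if every $(\a,\b)$ with $\a^\top A + \b^\top C \geq 0$ and $\a \geq 0$ satisfies $\a^\top \bb + \b^\top \d \geq 0$. The easy direction is necessity: I would assume a feasible $\r$ exists and derive the inequality directly. Indeed, for any $(\a,\b)$ in the stated cone, since $\a \geq 0$ and $A\r \leq \bb$ we have $\a^\top A \r \leq \a^\top \bb$, and since $C\r = \d$ we have $\b^\top C \r = \b^\top \d$. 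Adding these and using $\a^\top A + \b^\top C \geq 0$ together with $\r \geq 0$ gives $0 \leq (\a^\top A + \b^\top C)\r = \a^\top A\r + \b^\top C\r \leq \a^\top \bb + \b^\top \d$, which is exactly the claimed inequality.

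The substantive direction is sufficiency, and I would prove its contrapositive: if the primal system is infeasible, then there exists $(\a,\b)$ with $\a^\top A + \b^\top C \geq 0$, $\a \geq 0$, and $\a^\top \bb + \b^\top \d < 0$. The cleanest route is a separating-hyperplane argument. Consider the convex set
\begin{equation*}
	K = \{(A\r + \bs,\ C\r)\ :\ \r \geq 0,\ \bs \geq 0\},
\end{equation*}
which is a polyhedral (hence closed) convex cone in the appropriate product space. Primal infeasibility means precisely that $(\bb,\d) \notin K$. Since $K$ is closed and convex, the separating-hyperplane theorem yields a vector $(\a,\b)$ and a scalar separating $(\bb,\d)$ strictly from $K$; because $K$ is a cone, the separating constant can be taken to be $0$, giving $\a^\top(A\r+\bs) + \b^\top(C\r) \geq 0$ for all $\r,\bs \geq 0$ while $\a^\top\bb + \b^\top\d < 0$.

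The main obstacle, and the step requiring care, is extracting the sign and structural constraints on $(\a,\b)$ from the separation inequality $\a^\top(A\r+\bs) + \b^\top(C\r) \geq 0$ holding for all $\r \geq 0$ and $\bs \geq 0$. I would argue as follows. Setting $\r = 0$ leaves $\a^\top \bs \geq 0$ for every $\bs \geq 0$; varying $\bs$ over the nonnegative coordinate directions forces $\a \geq 0$. Next, setting $\bs = 0$ leaves $(\a^\top A + \b^\top C)\r \geq 0$ for every $\r \geq 0$; again testing against the coordinate directions of $\r$ forces $\a^\top A + \b^\top C \geq 0$. Thus the constructed $(\a,\b)$ lies in the stated cone yet violates $\a^\top\bb + \b^\top\d \geq 0$, completing the contrapositive and hence the lemma. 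An alternative to the geometric argument, which avoids invoking closedness of $K$ explicitly, is to phrase feasibility as the optimal value of the trivial linear program $\min\{0 : A\r \leq \bb,\ C\r = \d,\ \r \geq 0\}$ and appeal to LP strong duality: the primal is infeasible exactly when the dual is unbounded, and an unbounded dual ray furnishes the desired certificate $(\a,\b)$. Either way, I expect the delicate point to be the careful bookkeeping of which sign constraints on the multipliers arise from the inequality rows ($\a \geq 0$) versus the equality rows ($\b$ free), matching the asymmetry already present in the lemma's hypotheses.
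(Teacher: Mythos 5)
Your proposal is correct. Note, however, that the paper does not prove this lemma at all: it is quoted as a known result with a citation to \cite[Chapter 3.2]{Conforti2014}, so there is no in-paper argument to compare against. Your two directions are both sound: the necessity computation $0 \leq (\a^\top A + \b^\top C)\r = \a^\top A\r + \b^\top C\r \leq \a^\top \bb + \b^\top \d$ is exactly the standard weak-duality calculation, and the contrapositive via separating $(\bb,\d)$ from the cone $K = \{(A\r+\bs,\,C\r) : \r \geq 0,\ \bs \geq 0\}$, normalizing the separating constant to $0$ because $K$ is a cone containing the origin, and then reading off $\a \geq 0$ and $\a^\top A + \b^\top C \geq 0$ by testing $\r = 0$ and $\bs = 0$ against coordinate directions, is the classical textbook proof. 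The one step you pass over lightly is the assertion that $K$ is closed: this is the genuine mathematical content of the lemma, since $K$ is the image of the nonnegative orthant under a linear map, i.e., a finitely generated cone, and the closedness of such cones is itself a nontrivial fact (usually proved via Carath\'eodory's theorem for cones or Fourier--Motzkin elimination, and essentially equivalent in strength to Farkas' lemma). Invoking it as a known property of polyhedral cones is legitimate but should be flagged as the load-bearing ingredient; your alternative route through LP strong duality has the same issue in disguise, since strong duality is typically derived from Farkas. Also note the lemma implicitly assumes the ambient dimensions are finite, which is what makes the finitely-generated-cone argument available.
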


In order to apply Farkas' Lemma,  we define dual variables $\alpha_{ij} \geq 0$ and $\beta_i^{k,s}$ associated with first and second families of linear inequalities in problem \eqref{tr}, and obtain the dual problem of the above \TR subproblem as follows:
	\begin{align} \revv{\minimize_{\substack{\a,\,\b}}}&  
		\sum_{(i,j) \in \CL} C_{ij} \alpha_{ij}  + \sum_{i \in \I} \sum_{k \in \K} \sum_{s \in \F^k\cup\{0\}} b_i^{k,s} (\bar{\x}) \beta_i^{k,s}  \nonumber\\
		{\revv{\text{subject to}}~} & \lambda_s^k \alpha_{ij} - \beta_i^{k,s} + \beta_j^{k,s}\geq 0,\nonumber\\
		&  \quad ~\forall~(i,j)\in \mathcal{L},~\forall~s\in \F^k\cup \{0\},~\forall~k\in \mathcal{K},\nonumber\\
		& \alpha_{ij} \geq 0, ~\forall~(i,j) \in \mathcal{L}.	\label{dtr}
		\tag{TR-D}
	\end{align}
Then problem \eqref{tr} has a feasible solution if and only if the optimal value of its dual \eqref{dtr} is nonnegative.

\begin{algorithm}[t]
	\caption{The decomposition algorithm for problem \eqref{ns}}
	\label{alg1}
	\begin{algorithmic}[1]
		\renewcommand{\algorithmicrequire}{\textbf{Input:}}
		\renewcommand{\algorithmicensure}{\textbf{Output:}}
		\STATE Set $\mathcal{C}\leftarrow \emptyset$ and $t \leftarrow 1$;
		\WHILE {$t \leq \IterMax $}
		\STATE Add the inequalities in set $\mathcal{C}$ into problem \eqref{vnf} to obtain a tightened \FP subproblem;
		\STATE If the tightened problem is feasible, let $(\bar{\x}, \bar{\y})$ be its solution; otherwise, stop and declare that problem \eqref{ns} is infeasible;	
		\STATE If problem \eqref{dtr} is unbounded, let $(\bar{\a}, \bar{\b})$ be its extreme ray such that \eqref{infinitecase} holds; otherwise, stop and declare that {$\bar{\x}$} {is} an optimal solution of problem \eqref{ns};
		\STATE Add the valid inequality \eqref{BDFcut} into set $\mathcal{C}$;
		\STATE $t \leftarrow t+1$;
		\ENDWHILE
	\end{algorithmic} 
\end{algorithm}

For the above problem \eqref{dtr}, as the all-zero vector is feasible, one of the following two cases must happen: (i) its optimal value is equal to zero; (ii) it is unbounded.
In case (i), by \cref{fakas}, problem \eqref{tr} has a feasible solution $\bar{\r}$, and
point $(\bar{\x}, \bar{\y}, \bar{\r})$ must be an optimal solution of problem \eqref{ns}.
In case (ii), there exists an extreme ray $(\bar{\a}, \bar{\b})$ of problem \eqref{dtr} for which 
\begin{equation}
	\label{infinitecase}
	\sum_{(i,j) \in \CL} C_{ij} \bar{\alpha}_{ij} + \sum_{i \in \I} \sum_{k \in \K} \sum_{s \in \F^k\cup \{0\}} b_i^{k,s} (\bar{\x}) \bar{\beta}_{i}^{k,s} < 0,
\end{equation}
meaning that point $(\bar{\x}, \bar{\y})$ cannot define a feasible solution of problem \eqref{ns}.
Notice that the extreme ray $(\bar{\a}, \bar{\b})$ satisfying  \eqref{infinitecase} can be obtained if problem \eqref{dtr} is solved by \revv{the} primal or dual simplex algorithm; see \cite[Chapter 3]{Dantzig1997}.
To remove point $(\bar{\x}, \bar{\y})$ from problem \eqref{vnf}, we can construct the following linear inequality
\begin{equation}
	\label{BDFcut}
	\sum_{(i,j) \in \CL} \bar{\alpha}_{ij}  C_{ij} + \sum_{i \in \I} \sum_{k \in \K} \sum_{s \in \F^k\cup \{0\}}\bar{\beta}_{i}^{k,s}  b_i^{k,s} ({\x})  \geq 0.
\end{equation}

\rev{Observe that
	\begingroup
	\allowdisplaybreaks
	\begin{align}
			& \sum_{i \in \I} \sum_{k \in \K} \sum_{s \in \F^k\cup \{0\}}\bar{\beta}_{i}^{k,s}  \left( \sum_{j: (j,i) \in \mathcal{{L}}} r_{ji}^{k,s} - \sum_{j: (i,j) \in \mathcal{{L}}} r_{ij}^{k,s} \right)\nonumber\\
			& ~~~=\sum_{k \in \K} \sum_{s \in \F^k\cup \{0\}} \sum_{i \in \I} \bar{\beta}_{i}^{k,s}  \left( \sum_{j: (j,i) \in \mathcal{{L}}} r_{ji}^{k,s} - \sum_{j: (i,j) \in \mathcal{{L}}} r_{ij}^{k,s} \right)\nonumber\\
			& ~~~\revv{\stackrel{(a)}{=}\sum_{k \in \K} \sum_{s \in \F^k\cup \{0\}} \left( \sum_{(i,j) \in \mathcal{{L}}} \bar{\beta}_{j}^{k,s}   r_{ij}^{k,s}-\sum_{(i,j) \in \mathcal{{L}}} \bar{\beta}_{i}^{k,s}   r_{ij}^{k,s}\right)}\nonumber\\
			& ~~~\revv{=\sum_{k \in \K} \sum_{s \in \F^k\cup \{0\}} \sum_{(i,j) \in \mathcal{{L}}}\left(  \bar{\beta}_{j}^{k,s}   - \bar{\beta}_{i}^{k,s} \right)r_{ij}^{k,s}},	\label{tmpeq}
	\end{align}
	\endgroup
	where (a) follows from
	\begin{align*}
		& \sum_{i \in \I}\bar{\beta}_{i}^{k,s}   \sum_{j: (j,i) \in \mathcal{{L}}} r_{ji}^{k,s} = \sum_{(j,i) \in \mathcal{{L}}} \bar{\beta}_{i}^{k,s}   r_{ji}^{k,s} = \sum_{(i,j) \in \mathcal{{L}}} \bar{\beta}_{j}^{k,s}   r_{ij}^{k,s} , \\
		& -\sum_{i \in \I}\bar{\beta}_{i}^{k,s}   \sum_{j: (i,j) \in \mathcal{{L}}} r_{ij}^{k,s} = -\sum_{(i,j) \in \mathcal{{L}}} \bar{\beta}_{i}^{k,s}   r_{ij}^{k,s}.
	\end{align*}
}%
\rev{Hence,}
\begin{align*}
	& \sum_{(i,j) \in \CL}  \bar{\alpha}_{ij}  C_{ij}+ \sum_{i \in \I} \sum_{k \in \K} \sum_{s \in \F^k\cup \{0\}}\bar{\beta}_{i}^{k,s}  b_i^{k,s} ({\x}) \qquad \qquad\qquad\qquad\qquad \\
	&~~~\rev{\stackrel{(a)}{\geq}} \sum_{(i,j) \in \CL}   \bar{\alpha}_{ij}\left( \sum_{k \in \mathcal{K}} \sum_{s\in \mathcal{F}^k \cup \{0\}}\lambda_{s}^k r_{ij}^{k,s}\right) + \\ 
	& \quad \qquad\sum_{i \in \I} \sum_{k \in \K} \sum_{s \in \F^k\cup \{0\}}\bar{\beta}_{i}^{k,s}  \left( \sum_{j: (j,i) \in \mathcal{{L}}} r_{ji}^{k,s} - \sum_{j: (i,j) \in \mathcal{{L}}} r_{ij}^{k,s} \right)\\
	& ~~~\rev{\stackrel{(b)}{=}}\sum_{(i,j) \in \CL}   \sum_{k \in \mathcal{K}} \sum_{s\in \mathcal{F}^k \cup \{0\}} \left(\lambda_{s}^k \bar{\alpha}_{ij} - \bar{\beta}_{i}^{k,s} + \bar{\beta}_{j}^{k,s}\right)  r_{ij}^{k,s} \rev{\stackrel{(c)}{\geq}} 0,
\end{align*}
where \rev{(a)} follows from $\bar{\alpha}_{ij}\geq 0$, \eqref{linkcapcons}, and \eqref{SFC}\rev{;} \rev{(b) follows from \eqref{tmpeq}}; and \rev{(c)} follows from \eqref{rbounds} and the first set of inequalities in \eqref{dtr}.
\rev{This shows that inequality \eqref{BDFcut} is a valid inequality for problem \eqref{ns} in the sense that \eqref{BDFcut} holds at every feasible solution of problem \eqref{ns}.}
\rev{Therefore,} we can add \eqref{BDFcut} into problem \eqref{vnf} to obtain a tightened \FP problem, which is still a relaxation of problem \eqref{ns}.
Moreover, by \eqref{infinitecase}, the current solution $(\bar{\x}, \bar{\y})$ is infeasible to this tightened \FP problem.
As a result, we can solve the tightened \FP problem again to obtain a new solution.
This process is repeated until case (i) happens.
The above procedure is summarized as \cref{alg1}.

\subsection{Analysis Results and Remarks}\label{subsec:analysis}
In this subsection, we provide some analysis results and remarks on the proposed decomposition algorithm. 

First, at each iteration of \cref{alg1}, the inequality in \eqref{BDFcut} will remove a feasible point $\bar{\x}$ from problem \eqref{vnf} (if it is an infeasible point of problem \eqref{ns}). 
This, together with the fact that the feasible binary solutions $\bar{\x}$ of problem \eqref{vnf} are finite, implies \rev{that, for problem \eqref{ns},  \cref{alg1} with a sufficiently large $\IterMax$ will either find one of its optimal solutions (if the problem is feasible) or declare the infeasibility (if it is infeasible).}
Although the worst-case iteration complexity of the proposed decomposition algorithm grows exponentially with the number of variables, our simulation results show that it usually finds an optimal solution of problem \eqref{ns} within a small number of iterations; see \cref{sect:numres} further ahead.

Second, each iteration of \cref{alg1} needs to solve an \MBLP problem \eqref{vnf} and an LP problem \eqref{dtr}.
It is worthwhile highlighting that these two subproblems are much  easier to solve than the original problem \eqref{ns} and this feature makes the proposed decomposition algorithm particularly \rev{suitable to solve} large-scale \NS problems.
To be specific, although subproblem \eqref{vnf} is still an MBLP problem, \rev{both of its numbers of variables and constraints are}  $\mathcal{O}(|\mathcal{V}|\sum_{k\in \mathcal{K}}\ell_k)$.
\rev{This is significantly smaller than those of problem \eqref{ns},
	which are 
	$$\mathcal{O}\left((|\mathcal{V}|+|\mathcal{L}|)\sum_{k\in \mathcal{K}}\ell_k\right)~\text{and}~ \mathcal{O}\left({\min}\left\{|\mathcal{I}|\sum_{k\in \mathcal{K}}\ell_k, |\mathcal{L}|\right\}\right),$$ 
	respectively, especially when the number of cloud nodes is much smaller than the numbers of nodes and links in the network.}
Subproblem \eqref{dtr} is an LP problem, which is polynomial time solvable within the complexity $\mathcal{O}\left (\left(|\mathcal{L}|\sum_{k\in \mathcal{K}}\ell_k\right)^{3.5}\right)$ \cite[\rev{Chapter} 6.6.1]{Ben-Tal2001}.

\revv{Finally}, as compared with traditional two-stage heuristic algorithms \cite{Chowdhury2012}-\cite{Luizelli2015} that solve subproblems \eqref{vnf} and \eqref{dtr} in a one-shot fashion, our proposed decomposition algorithm iteratively solves the two subproblems enabling a useful information feedback between them (as shown in \cref{ideaTSD}), which enables it to return an optimal solution of problem \eqref{ns} and thus much better solutions than those in \cite{Chowdhury2012}-\cite{Luizelli2015}.

\section{Valid Inequalities}
\label{sect:ineq}

As seen in \cref{sect:TSD}, problem \eqref{vnf} is a relaxation of problem \eqref{ns} 
in which all traffic routing related variables and constraints (i.e., \eqref{linkcapcons}, \eqref{SFC}, and \eqref{rbounds}) are {completely} dropped. 
The goal of this section is to derive some valid inequalities from the dropped constraints in order to strengthen problem \eqref{vnf} and thus accelerate the convergence of \cref{alg1}.
In particular, we shall derive two families of valid inequalities from problem \eqref{ns} in the $(\x, \y)$-space 
by explicitly taking the connectivity of the nodes and the limited link capacity constraints in the underlying network into consideration, and add them into problem \eqref{vnf}, thereby making the convergence of Algorithm \ref{alg1} more quickly.

\subsection{Connectivity based Inequalities}
\label{susbec:connectivityIneq}
The first family of valid inequalities is derived by considering the connectivity of the source node $S^k$ and the node hosting function $f_{1}^k$, the node hosting function $f_{\ell_k}^k$ and the destination $D^k$, and the two nodes hosting two adjacent functions in the SFC of service $k$. 
To present the inequalities, we define  the following notations 
\begin{align}
	& \!\!\V(S^k) \!=\! \left\{ v \in \V\mid\text{there is no path from $S^k$ to $v$} \right\},\label{VSKdef}\\
	& \!\!\V(D^k) \!=\! \left\{ v \in \V\mid \text{there is no path from $v$ to $D^k$} \right\},\label{VDKdef}\\
	& \!\!\V(v_0) \!=\! \left\{ v \in \V\backslash \{v_0\}\mid \text{there is no path from $v_0$ to $v$} \right\},	\label{v0def}
\end{align}
where $v_0 \in \V$.
The computations of subsets $\V(S^k)$ and $\V(D^k)$ in \eqref{VSKdef} and \eqref{VDKdef}, respectively, for all $k \in \K$, and $\V(v_0)$ in \eqref{v0def} for all $v_0 \in \V$,
can be done by computing the {transitive closure matrix} $M^G \in \{0,1\}^{|\I| \times |\I|}$ of the directed graph $G$ defined by 
\begin{equation*}
	M^G_{ij} = \left\{\begin{array}{ll}
		1, & \text{if there exists a path from node $i$ to node $j$};\\[5pt]
		0, & \text{otherwise}.
	\end{array}\right.
\end{equation*}
The transitive closure matrix $M^G$ can be computed by applying the	 breadth first search for each node $i \in \I$ with a complexity  of $\mathcal{O}(|\I|(|\I|+|\CL|))$.

\subsubsection{Three forms of derived inequalities}
We first present the simplest form of connectivity based inequalities to ease the understanding. 
To ensure the connectivity from the source node $S^k$ to the node hosting function $f_1^k$, function $f_{1}^k$ cannot be hosted at cloud nodes $v \in \V(S^k)$:
\begin{equation}
	\label{SFCineq1} 
	x^{k,1}_v = 0, ~\forall~v \in \V(S^k),~\forall~ k \in \K.
\end{equation}
Similarly, to ensure the connectivity from the node hosting function $f_{\ell_k}^k$ to the destination node $D^k$,  function $f_{\ell_k}^k$ cannot be hosted at cloud nodes $v \in \V(D^k)$:
\begin{equation}
	x_{v}^{k,\ell_k} = 0,~ \forall~v \in \V(D^k), ~\forall~ k \in \K. \label{SFCineq2}
\end{equation}
Suppose that function $f_{s}^k$ is hosted at cloud node $v_0 \in \mathcal{V}$.
To ensure the connectivity from node $v_0$ to the node hosting function  $f_{s+1}^k$, function  $f_{s+1}^k$ cannot be hosted at cloud node $v \in \V(v_0)$.
This constraint can be written by 
\begin{align}
	& x_{v_0}^{k,s}	+ x^{k,s+1}_v \leq 1,\nonumber\\
	&  \qquad \quad~\forall~v \in \V(v_0),~\forall~v_0 \in \V,~\forall~s \in \F^k,~\forall~k \in \mathcal{K}. \label{SFCineq3} 
\end{align}

Inequalities \eqref{SFCineq1}--\eqref{SFCineq3} can be further extended. 
Indeed, for each $k \in \K$ and $s \in \F^k$, there must be a path from the source node $S^k$ to the node hosting function $f_s^k$, and the same to the node hosting function $f_s^k$ and the destination node $D^k$. 
This implies that function $f_s^k$ cannot be hosted at cloud node $v \in \V(S^k)\cup \V(D^k)$, and as a result, inequalities \eqref{SFCineq1} and \eqref{SFCineq2} can be extended to
\begin{equation}
	\label{SFCineq-1}
	x^{k,s}_v = 0,~\forall~v \in \V(S^k) \cup \V(D^k),~\forall~s \in \F^k,~\forall~ k \in \K.
\end{equation}
Similarly, for $s \in \F^k \backslash \{\ell_k\}$, there must be a path from the node hosting function $f_{s}^k$  to the node hosting function $f_{s_0}^k$, $s_0 \in \{ s+1, \ldots, \ell_k \}$, implying that
\begin{align}
	& x_{v_0}^{k,s}	+ x_v^{k,s_0} \leq 1, ~\forall~v \in \V(v_0),~\forall~v_0 \in \V, \nonumber \\
	&\quad\forall~s_0 \in \{ s+1, \ldots, \ell_k\},~\forall~\revv{s \in \F^k\backslash \{\ell_k\}},~\forall~k \in \mathcal{K}.\label{SFCineq-2}
\end{align}
Clearly, inequalities \eqref{SFCineq-1}--\eqref{SFCineq-2} include inequalities \eqref{SFCineq1}--\eqref{SFCineq3} as special cases. 

Next, we develop a more compact way to present the connectivity based inequalities \eqref{SFCineq-2}.
To proceed, we provide the following proposition, which is useful in deriving the new inequalities and studying the relations between different inequalities in  \cref{thm1}.
\begin{proposition}
	\label{fact}
	The following three statements are true:
	\begin{itemize}
		\item [(i)] for any given $v_0 \in \V$, there does not exist a path from any node in $\V \backslash \V(v_0)$ to any node in $ \V(v_0)$ and $\V(v_0) \subseteq \V(v')$ holds for all $v' \in \V\backslash \V(v_0)$;
		\item [(ii)] for any given $k \in \K$, there does not exist a path from any node in $\V \backslash \V(S^k)$ to any node in $ \V(S^k)$ and $\V(S^k) \subseteq \V(v_0)$ holds for all $v_0 \in \V\backslash \V(S^k)$; and 
		\item [(iii)] for any given $k \in \K$, there does not exist a path from any node in $\V(D^k)$ to any node in $\V \backslash \V(D^k)$ and $\V\backslash \V(D^k)\subseteq \V(v_0) $ holds for all $v_0 \in \V(D^k)$.
	\end{itemize}
\end{proposition}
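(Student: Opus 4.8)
The plan is to observe that each of the three statements bundles together two assertions of the same flavor: a \emph{non-reachability} claim (no path exists between certain node sets) and a \emph{set-inclusion} claim. The key structural remark is that the inclusion claim is an immediate corollary of the non-reachability claim in each case, because membership in every set $\V(\cdot)$ appearing here is \emph{defined} precisely by the absence of a path. Hence the real content to be established is the three non-reachability statements, and all of them follow from a single principle: transitivity of reachability, i.e., concatenation of paths. I would therefore prove the non-reachability part of each item first and then read off the inclusion.

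For item (i), fix $v_0 \in \V$ and take any $v' \in \V \backslash \V(v_0)$ and any $v \in \V(v_0)$. By definition $v'$ is either $v_0$ itself or reachable from $v_0$, while $v$ is not reachable from $v_0$. If a path from $v'$ to $v$ existed, concatenating it with the (possibly trivial) path from $v_0$ to $v'$ would yield a path from $v_0$ to $v$, contradicting $v \in \V(v_0)$; this proves the non-reachability assertion. The inclusion $\V(v_0) \subseteq \V(v')$ then follows at once: any $v \in \V(v_0)$ satisfies $v \neq v'$ (the two sets are disjoint) and, by what we just showed, admits no path from $v'$, so $v \in \V(v')$. Item (ii) is the same argument with the role of $v_0$ replaced by the source $S^k$: for $v_0 \in \V \backslash \V(S^k)$ (reachable from $S^k$) and $v \in \V(S^k)$ (not reachable from $S^k$), a path from $v_0$ to $v$ combined with the path $S^k \to v_0$ would contradict $v \in \V(S^k)$, and the inclusion $\V(S^k) \subseteq \V(v_0)$ follows because $v \neq v_0$ and no path from $v_0$ reaches $v$.

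The one place that needs genuine care is item (iii), where the reachability relation points toward the destination rather than away from a source, so the concatenation must be taken in the opposite order. Here $\V(D^k)$ collects the cloud nodes that \emph{cannot reach} $D^k$, and $\V \backslash \V(D^k)$ those that can. For $v_0 \in \V(D^k)$ and $v \in \V \backslash \V(D^k)$, a path from $v_0$ to $v$ followed by the path $v \to D^k$ would produce a path $v_0 \to D^k$, contradicting $v_0 \in \V(D^k)$; this gives the non-reachability part stated as ``no path from $\V(D^k)$ to $\V \backslash \V(D^k)$''. The inclusion $\V \backslash \V(D^k) \subseteq \V(v_0)$ then follows exactly as before, using $v \neq v_0$ and the absence of a path from $v_0$ to $v$. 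The only subtleties to keep explicit throughout are the convention that a node reaches itself by the trivial path (so that $v'=v_0$ is handled uniformly in (i)) and the exclusion of $v_0$ from $\V(v_0)$, which is why the disjointness $v \neq v_0$ must be verified before concluding membership in $\V(v_0)$. I do not anticipate a hard step; the main risk is merely keeping the direction of the paths straight between the ``source'' cases (i)--(ii) and the ``destination'' case (iii).
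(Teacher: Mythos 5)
Your proof is correct and follows essentially the same route as the paper's: a path-concatenation contradiction establishes the non-reachability claim, and the inclusion then falls out of the definition of $\V(\cdot)$. The only cosmetic difference is that you write out items (ii) and (iii) explicitly (carefully reversing the concatenation order for the destination case), whereas the paper proves (i) and declares the other two similar.
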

\begin{proof}
	Let $v_1 \in \V\backslash \V(v_0)$ and $v_2 \in \V(v_0)$.
	Clearly, the statement is true if $v_1 = v_0$.
	Otherwise, suppose that there exists a path from node $v_1$ to node $v_2$. 
	By the definition of $\V(v_0)$ in \eqref{v0def}, there must exist a path from node $v_0$ to node $v_2$, i.e., $v_0 \rightarrow v_1 \rightarrow v_2$; see \cref{fignodes} for an illustration. 
	However, this contradicts with $v_2 \in \V(v_0)$. 
	As a result, no path exists from node $v_1$ to node $v_2$, which shows the first part of (i). 
	To prove the second part of (i), we note that there does not exist a path from $v' \in \V\backslash \V(v_0)$ to any node in $\V(v_0)$. 
	Moreover, according to the definition of $\V(v')$ in \eqref{v0def}, $\V(v')$ is the set of all nodes to which there does not exist a path from $v'$.
	As a result, $\V(v_0) \subseteq \V(v')$.
	The proofs of (ii) and (iii) are similar.
\end{proof}

\begin{figure}
	\begin{center}
\begin{tikzpicture}[> = Stealth]
	\node (1) at (0,0) {};\filldraw(1.east) circle (2pt);
	\node (2) [right of = 1] {};\filldraw(2.east) circle (2pt);
	\node (3) [right of = 2,label=above:$~~~v_1$] {};\filldraw(3.east) circle (2pt);
	\node (4) [above of = 2,label=above:$v_0$] {};\filldraw(4.east) circle (2pt);
	\node[fit = (1) (3) (4) , ellipse, draw=blue, thick, minimum width=2cm, label=below:$\V\backslash\V(v_0)$] (A){};
	
	\node (1_1)  at (4,0)[label=above:$v_2$] {};\filldraw(1_1.east) circle (2pt);
	\node (2_2) [right of=1_1] {};\filldraw(2_2.east) circle (2pt);
	\node (3_3) [right of=2_2] {};\filldraw(3_3.east) circle (2pt);
	\node (4_3) [above of=2_2] {};
	\node[fit = (1_1) (3_3) (4_3), ellipse, draw=black, thick, minimum width=2cm, label=below:\(\V(v_0)\)] (B) {};
	\draw[->, shorten >=.1cm,thick] (4.east) to (1.east);
	\draw[->, shorten >=.1cm,thick] (4.east) to (2.east);
	\draw[->, shorten >=.1cm,thick] (4.east) to (3.east);
	
	\draw[->, shorten >=.1cm,color=red,thick,dash dot] (3.east) to (1_1.east);
	
\end{tikzpicture}
\end{center}
\caption{A toy example for illustrating that (i) there does not exist a path from any node in $\V\backslash\V(v_0)$ to any node in $\V(v_0)$; and (ii) for all $v \in \V\backslash(\V(v_0)\cup\{v_0\})$, there exists a path from node $v_0$ to node $v$.
The red dot line is drawn for the contradiction proof of \cref{fact}.}
\label{fignodes}
\end{figure}
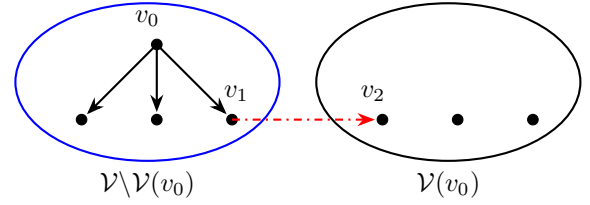
\noindent The result in \cref{fact} (i) implies that if function $f^k_{s}$ is hosted at a node in $\V \backslash \V(v_0)$, then function $f_{s+1}^k$ must also be hosted at a node in $\V \backslash \V(v_0)$.
This, together with \eqref{onlyonenode}, implies that 
\begin{align}
	& \sum_{v \in \V \backslash \V(v_0)} x^{k,s}_v \leq \sum_{v \in \V \backslash \V(v_0)} x^{k,s+1}_v,\nonumber \\
	& \qquad \qquad\qquad \forall~v_0 \in \V,~\forall~s\in \F^k\backslash\{\ell_k\}, ~\forall~k \in \K.\label{SFCineq-3}
\end{align}
It should be mentioned that if $\V(v_0)= \emptyset$, then it follows from \eqref{onlyonenode} that both the left-hand and right-hand sides of \eqref{SFCineq-3} are all equal to one, and thus \eqref{SFCineq-3} holds naturally.

\subsubsection{Relations of the inequalities}
Now we study the relations of adding Eqs. \eqref{SFCineq1}--\eqref{SFCineq3}, Eqs. \eqref{SFCineq-1}--\eqref{SFCineq-2}, and Eqs. \eqref{SFCineq-1} and \eqref{SFCineq-3} into problem \eqref{vnf}.
To do this, let
\begin{align*}
	& \X^1 = \{ \x \mid \eqref{onlyonenode},~ \eqref{xbounds},~\eqref{SFCineq1}\text{--}\eqref{SFCineq3} \}, \\
	& \X^2 = \{ \x \mid \eqref{onlyonenode},~\eqref{xbounds},~\eqref{SFCineq-1}\text{--}\eqref{SFCineq-2} \}, \\
	& \X^3 = \{ \x \mid \eqref{onlyonenode},~\eqref{xbounds},~\eqref{SFCineq-1},~ \eqref{SFCineq-3} \}.
\end{align*}
Moreover, let $\X^a_{\LIN}$ ($a=1, 2,3$) denote the linear relaxation of $\X^a$ in which \eqref{xbounds} is replaced by 
\begin{equation}
	x^{k,s}_v \in [0,1],~\forall~v\in\mathcal{{V}},~\forall~s\in \mathcal{F}^k,~\forall~k\in\mathcal{K}.\label{xboundsL}
\end{equation}
The following theorem clearly characterizes the relations of three sets $\X^1$, $\X^2$, and $\X^3$ as well as their linear relaxations. 

\begin{theorem}
	\label{thm1}
	The following two relationships are true: (i) $\X^1 = \X^2 = \X^3$; and (ii) $\X^3_{\LIN} \subseteq \X^2_{\LIN} \subseteq \X^1_{\LIN}$ \footnote{\rev{Note that in general, it is possible that two 0-1 sets, represented by different inequalities and binary constraints $\x \in \{0,1\}^n$, are equivalent but their linear relaxations are different; see \cite[Chapter 1.6--1.7]{Wolsey2021}.}}.
\end{theorem}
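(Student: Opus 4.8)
The plan is to prove part (ii) first and then harvest most of part (i) from it. The key structural observation is that each set is the intersection of its linear relaxation with the integrality constraint \eqref{xbounds}, i.e. $\X^a = \X^a_{\LIN}\cap\{0,1\}^{(\cdot)}$, and that all three sets share the constraints \eqref{onlyonenode} and (in the relaxed case) \eqref{xboundsL}. Consequently, once the chain $\X^3_{\LIN}\subseteq\X^2_{\LIN}\subseteq\X^1_{\LIN}$ is established, intersecting with $\{0,1\}$ immediately gives $\X^3\subseteq\X^2\subseteq\X^1$, and for part (i) only the reverse inclusion $\X^1\subseteq\X^3$ remains to be shown.

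For part (ii), the inclusion $\X^2_{\LIN}\subseteq\X^1_{\LIN}$ is a matter of constraint containment: \eqref{SFCineq1} and \eqref{SFCineq2} are the $s=1$ and $s=\ell_k$ instances of \eqref{SFCineq-1}, while \eqref{SFCineq3} is the $s_0=s+1$ instance of \eqref{SFCineq-2}, so any point satisfying \eqref{xboundsL} and \eqref{SFCineq-1}--\eqref{SFCineq-2} automatically satisfies \eqref{SFCineq1}--\eqref{SFCineq3}. The substantive inclusion is $\X^3_{\LIN}\subseteq\X^2_{\LIN}$: since \eqref{SFCineq-1} is common to both sets, I only need to derive \eqref{SFCineq-2} over the reals from \eqref{onlyonenode}, \eqref{SFCineq-3}, and \eqref{xboundsL}. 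The idea is to telescope \eqref{SFCineq-3} along the chain: summing the consecutive inequalities for indices $s, s+1, \ldots, s_0-1$ gives $\sum_{u\in\V\setminus\V(v_0)} x^{k,s}_u \leq \sum_{u\in\V\setminus\V(v_0)} x^{k,s_0}_u$. Because $v_0\in\V\setminus\V(v_0)$ and all coordinates are nonnegative, the left side dominates $x^{k,s}_{v_0}$; and because $v\in\V(v_0)$ is disjoint from $\V\setminus\V(v_0)$, adding $x^{k,s_0}_v$ to the right side and invoking \eqref{onlyonenode} bounds the total by $1$. Chaining these yields $x^{k,s}_{v_0}+x^{k,s_0}_v\leq 1$, which is exactly \eqref{SFCineq-2}. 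This telescoping-and-aggregation is where the extra strength of the compact form \eqref{SFCineq-3} over the pairwise form \eqref{SFCineq-2} shows up, and I expect it to be the main technical point of the proof.

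It remains to prove $\X^1\subseteq\X^3$ to close part (i), where I work with a binary $\x$ and reason directly about reachability. For \eqref{SFCineq-1} I induct along the SFC: forward from $S^k$, constraint \eqref{SFCineq1} places $f_1^k$ at a node reachable from $S^k$, and if $f_s^k$ sits at a node reachable from $S^k$ then \eqref{SFCineq3} forces $f_{s+1}^k$ onto a node reachable from it, so concatenating paths keeps $f_{s+1}^k$ out of $\V(S^k)$. A symmetric downward induction from $D^k$ (anchored by \eqref{SFCineq2} and again using \eqref{SFCineq3} to link $f_s^k$ to $f_{s+1}^k$) shows every $f_s^k$ lies at a node that can reach $D^k$, i.e. outside $\V(D^k)$; together these give \eqref{SFCineq-1}. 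For \eqref{SFCineq-3}, on a binary point both sides are $0$ or $1$, so the inequality can only fail if $f_s^k$ is hosted at some $v'\in\V\setminus\V(v_0)$ while $f_{s+1}^k\in\V(v_0)$; but \eqref{SFCineq3} applied at $v_0:=v'$ keeps $f_{s+1}^k$ out of $\V(v')$, and \cref{fact}(i) gives $\V(v_0)\subseteq\V(v')$, hence $f_{s+1}^k\in\V\setminus\V(v')\subseteq\V\setminus\V(v_0)$, a contradiction. This establishes $\X^1\subseteq\X^3$, which together with $\X^3\subseteq\X^2\subseteq\X^1$ from part (ii) yields $\X^1=\X^2=\X^3$.
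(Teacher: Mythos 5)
Your proposal is correct and follows essentially the same route as the paper's proof: $\X^2_{\LIN}\subseteq\X^1_{\LIN}$ by constraint containment, $\X^3_{\LIN}\subseteq\X^2_{\LIN}$ by telescoping \eqref{SFCineq-3} and combining with \eqref{onlyonenode} and $v_0\in\V\backslash\V(v_0)$, and $\X^1\subseteq\X^3$ by induction along the SFC using the reachability facts of \cref{fact}. The only cosmetic differences are that you phrase the induction for \eqref{SFCineq-1} via explicit path concatenation rather than the paper's set-inclusion form of \cref{fact}(ii)--(iii), and you prove \eqref{SFCineq-3} by contradiction where the paper argues directly; both rest on the same ingredients.
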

\begin{proof}
	The proof can be found in Appendix \ref{appendixProof}.
\end{proof}

The result in \cref{thm1} (i) implies that the feasible regions of the newly obtained problems by adding Eqs. \eqref{SFCineq1}--\eqref{SFCineq3}, Eqs. \eqref{SFCineq-1}--\eqref{SFCineq-2}, or Eqs. \eqref{SFCineq-1} and \eqref{SFCineq-3} into problem \eqref{vnf} are the same.
However, compared with the other two, adding Eqs. \eqref{SFCineq-1} and \eqref{SFCineq-3} into problem \eqref{vnf} enjoys the following two advantages.
First, adding \eqref{SFCineq-1} and \eqref{SFCineq-3} into problem \eqref{vnf} yields a much more compact problem formulation:
\begin{equation}
	\label{vnf1}
	\tag{FP-I}
	\begin{aligned} \revv{\minimize_{\substack{\x,\,\y}}}~&  \sum_{v \in \mathcal{V}}p_v y_v + \sum_{v\in\mathcal{V}} \sum_{k \in \mathcal{K}} \sum_{s \in \F^k} c^{k,s}_v x^{k,s}_v  \\
		{\revv{\text{subject to}}~}&  \eqref{onlyonenode}\text{--}\eqref{nodecapcons},~\eqref{ybounds},~\eqref{xbounds},~ \eqref{SFCineq-1},~ \eqref{SFCineq-3},
	\end{aligned}
\end{equation}
as compared with the other two alternatives.
Indeed, 
\begin{itemize}
	\item compared with  \eqref{SFCineq1}--\eqref{SFCineq2}, more variables can be fixed to $0$ by \eqref{SFCineq-1}, and thus can be removed from problem \eqref{vnf1};
	\item compared with those in \eqref{SFCineq3} or \eqref{SFCineq-2}, the number of inequalities in \eqref{SFCineq-3} is potentially much more smaller.
	In particular, the number of inequalities in \eqref{SFCineq-3} is $\mathcal{O}(|\V|\sum_{k \in \K} \ell_k)$ while those in  \eqref{SFCineq3} and \eqref{SFCineq-2} are 
	$\mathcal{O}(|\mathcal{V}|^2\sum_{k \in \mathcal{K}}\ell_k)$ and $\mathcal{O}(|\mathcal{V}|^2\sum_{k \in \mathcal{K}}\ell_k^2)$, respectively.
\end{itemize}
Second, \cref{thm1} (ii) shows that adding \eqref{SFCineq-1} and \eqref{SFCineq-3} into problem \eqref{vnf} {can provide a stronger \LP bound}, as compared with the other two alternatives.
An  example for illustrating this is provided in the following \cref{example1}.
The strong LP relaxations and corresponding LP bounds play a crucial role in efficiently solving problem \eqref{vnf} by calling a standard \MBLP solver; see \cite[Section 2.2]{Conforti2014}.
Due to the above two reasons, we conclude that solving problem \eqref{vnf} with inequalities \eqref{SFCineq-1} and \eqref{SFCineq-3} (i.e., problem \eqref{vnf1}) is much more computationally efficient than solving the other two alternatives.

\begin{example}
	\label{example1}
	Consider the toy example in Fig. \ref{example}.
	Nodes $1$, $2$, and $3$ are cloud nodes whose computational capacities are $\mu_1 = +\infty$, $\mu_2 = +\infty$, and $\mu_3= 3$. 
	The power consumptions of the three cloud nodes are $p_1=p_2=p_3 = 0$. 
	There are two different functions, $f^1$ and $f^2$, and all three cloud nodes can process the two functions. 
	The power consumptions of placing functions to cloud nodes are $0$ except that the power consumptions of placing function $f^1$ to cloud nodes $1$ and $2$ are $1$.

	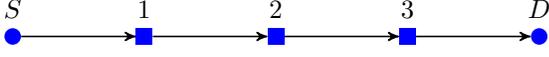
\begin{figure}[t]
		\begin{center}
			\begin{tikzpicture}[->,>=stealth',shorten >=0pt,auto,node distance=7cm,
				semithick]
				\tikzstyle{every state}=[fill=blue,draw=none,scale=0.25,text=white]	
				\node[state,label=above:{$S$}] (A)                  {};
				\node[state,label=above:{$1$},rectangle]         (B) [right of=A] {};
				\node[state,label=above:{$2$},rectangle]         (C) [right of=B] {};
				\node[state,label=above:{$3$},rectangle]         (D) [right of=C] {};
				\node[state,label=above:{$D$}]         (E) [right of=D] {};
				
				\path
				(A) edge             node {} (B)
				(B) edge 			 node {} (C)
				(C) edge 			 node {} (D)
				(D) edge 			 node {} (E)
				;
			\end{tikzpicture}
			\caption{A toy network example where the rectangular nodes are cloud nodes and the circular nodes are the source and destination of a service.}
			\label{example}
		\end{center}
	\end{figure}
	Suppose that there is a flow with its source and destination nodes being $S$ and $D$, respectively, its SFC being $f^1 \rightarrow f^2$, and its data rate being $2$.
	Then, solving the LP relaxation of problem \eqref{vnf1} will return a solution with the objective value being $\frac{1}{4}$. 
	However, solving the \LP relaxation of problem \eqref{vnf} with constraints \eqref{SFCineq1}--\eqref{SFCineq3} or with constraints \eqref{SFCineq-1}--\eqref{SFCineq-2} returns a solution with the objective value being $\frac{1}{6}$; see Appendix \ref{appendixA} for more details.
	This example clearly shows that adding constraints \eqref{SFCineq-1} and \eqref{SFCineq-3} into problem \eqref{vnf} can provide a strictly stronger \LP bound than adding constraints \eqref{SFCineq1}--\eqref{SFCineq3} or  \eqref{SFCineq-1}--\eqref{SFCineq-2} into problem \eqref{vnf}.
\end{example}

\subsubsection{Theoretical justification of the  connectivity based inequalities}
In this part, we present a theoretical analysis result of the connectivity based inequalities \eqref{SFCineq-1} and \eqref{SFCineq-3}
in the special case where the capacities of all links are  infinite.
We shall show that  problems \eqref{ns} and \eqref{vnf1} are equivalent in this special case in the sense that 
if $(\x,\y,\r)$ is a feasible solution of problem \eqref{ns}, then $(\x,\y)$ is a feasible solution of problem \eqref{vnf1};
and conversely, if $(\x,\y)$ is a feasible solution of problem \eqref{vnf1}, then there exists a vector $\r$ such that $(\x,\y,\r)$ is a feasible solution of problem \eqref{ns}.

\begin{theorem}
	\label{thm3}
	Suppose that $C_{ij} = + \infty$ for all $(i,j) \in \CL$.
	Then problem \eqref{ns} is equivalent to problem \eqref{vnf1}.
\end{theorem}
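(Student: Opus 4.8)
The plan is to prove the two feasibility directions implicit in the statement, after the key observation that the objective of \eqref{ns} depends only on $(\x,\y)$ and is literally the same as that of \eqref{vnf1}. Consequently, it suffices to show that the $(\x,\y)$-projections of the two feasible sets coincide: this immediately yields equality of optimal values, and lets optimal solutions of one problem be lifted to optimal solutions of the other. So the whole theorem reduces to the two implications spelled out just before it.

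First I would dispatch the easy direction: if $(\x,\y,\r)$ is feasible for \eqref{ns}, then $(\x,\y)$ is feasible for \eqref{vnf1}. Every constraint of \eqref{vnf1} other than \eqref{SFCineq-1} and \eqref{SFCineq-3}, namely \eqref{onlyonenode}--\eqref{nodecapcons}, \eqref{ybounds}, and \eqref{xbounds}, is already a constraint of \eqref{ns}; and \eqref{SFCineq-1} and \eqref{SFCineq-3} were shown in this section to be valid inequalities for \eqref{ns}. Hence $(\x,\y)$ satisfies all constraints of \eqref{vnf1}. Note that this direction needs no assumption on the link capacities.

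The main work is the converse: given $(\x,\y)$ feasible for \eqref{vnf1}, construct $\r$ with $(\x,\y,\r)$ feasible for \eqref{ns}. Since $C_{ij}=+\infty$, the link-capacity constraints \eqref{linkcapcons} are vacuous, so it suffices to produce a nonnegative $\r$ satisfying the flow-conservation system \eqref{SFC}. I would build $\r$ independently for each flow $(k,s)$ by sending its entire demand (fraction one) along a single directed path from its source to its sink, which reduces the task to a reachability statement: for every flow the sink node is reachable from the source node in $\mathcal{G}$. For the boundary flows $(k,0)$ and $(k,\ell_k)$, \eqref{SFCineq-1} forces $f_1^k$ to be hosted outside $\V(S^k)$ and $f_{\ell_k}^k$ outside $\V(D^k)$, so by the definitions \eqref{VSKdef}--\eqref{VDKdef} there is a path from $S^k$ to the node hosting $f_1^k$ and from the node hosting $f_{\ell_k}^k$ to $D^k$. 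For an intermediate flow $(k,s)$ with $1\le s<\ell_k$, letting $v_0$ be the node hosting $f_s^k$, constraint \eqref{onlyonenode} makes the left-hand side of \eqref{SFCineq-3} equal to one (since $v_0\in\V\setminus\V(v_0)$), forcing $f_{s+1}^k$ to be hosted at some $v_1\in\V\setminus\V(v_0)$; by \eqref{v0def}, either $v_1=v_0$ (a degenerate flow handled by $\r=0$) or $v_0$ reaches $v_1$.

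Finally I would check the bookkeeping: routing fraction one along the chosen path exactly matches the right-hand side $b_i^{k,s}(\x)$ of \eqref{SFC}, since the source carries $b=-1$ (in the intermediate case $b_{v_0}^{k,s}=x_{v_0}^{k,s+1}-x_{v_0}^{k,s}=-1$) and the sink carries the matching $+1$, while every interior node has $b=0$; superposing these single-path flows over all $(k,s)$ gives a nonnegative $\r$ feasible for \eqref{ns}. The main obstacle is precisely the reachability step for the intermediate flows, where the compact inequality \eqref{SFCineq-3} must be combined with \eqref{onlyonenode} to pin down where $f_{s+1}^k$ can sit; the remaining pieces are routine flow-conservation bookkeeping, which the infinite-capacity assumption renders unconstrained.
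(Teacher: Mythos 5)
Your proposal is correct and follows essentially the same route as the paper: the forward direction is immediate because \eqref{SFCineq-1} and \eqref{SFCineq-3} are valid inequalities so \eqref{vnf1} is a relaxation of \eqref{ns}, and the converse reduces to reachability of each flow's sink from its source, which together with $C_{ij}=+\infty$ yields a feasible $\r$. The only cosmetic difference is that you re-derive the needed reachability facts directly from \eqref{onlyonenode} and \eqref{SFCineq-3} (and spell out the single-path construction of $\r$), whereas the paper simply invokes \cref{thm1} to pass from \eqref{SFCineq-1} and \eqref{SFCineq-3} to \eqref{SFCineq1}--\eqref{SFCineq3}.
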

\begin{proof}
	As both \eqref{SFCineq-1} and \eqref{SFCineq-3} are valid inequalities, problem \eqref{vnf1} can be seen as a relaxation of problem \eqref{ns}. 
	Therefore,  if $(\x,\y,\r)$ is a feasible solution of problem \eqref{ns}, then $(\x,\y)$ must also be a feasible solution of problem \eqref{vnf1}.
	
	To prove the other direction, let $({\x}, {\y})$  be a feasible solution 
	of problem \eqref{vnf1}.
	By \cref{thm1}, Eqs. \eqref{SFCineq-1} and \eqref{SFCineq-3} imply Eqs. \eqref{SFCineq1}--\eqref{SFCineq3}. 
	Therefore, for the solution $({\x}, {\y})$,
	(i) by \eqref{SFCineq1}, there exists a path from the source $S^k$ to the cloud node hosting function $f_1^k$;
	(ii) by \eqref{SFCineq2}, there exists a path from the cloud node hosting function $f^k_{\ell_k}$ to the destination $D^k$; and
	(iii) by \eqref{SFCineq3}, there exists a path from the cloud node hosting  function $f_{s}^k$ to the cloud node hosting function $f_{s+1}^k$ for all $s \in \F^k\backslash\{\ell_k\}$.
	{The above,} together with the assumption that $C_{ij} = + \infty$ for all $(i,j) \in \CL$, implies that problem \eqref{tr} has a feasible solution $\r$, and thus $({\x}, {\y}, \r)$ is feasible solution of problem \eqref{ns}.
\end{proof}

\cref{thm3} justifies the effectiveness of the derived connectivity based inequalities \eqref{SFCineq-1} and \eqref{SFCineq-3} for the proposed \rev{\cref{alg1}}.
Specifically, if all links' capacities in the underlying network are infinite, then equipped with the connectivity based inequalities \eqref{SFCineq-1} and \eqref{SFCineq-3}, the proposed \rev{\cref{alg1}}  terminates in a single iteration.
Even when some or all links' capacities are finite, the connectivity based inequalities \eqref{SFCineq-1} and \eqref{SFCineq-3} can still accelerate the convergence of \cref{alg1}, as they ensure that,
for a solution $(\x, \y)$ returned by solving \eqref{vnf1},
necessary paths exist from source nodes to the nodes hosting functions in the SFC to destination nodes for all services.

\subsection{Link-Capacity based Inequalities}

In this subsection, we derive valid inequalities by taking the (possibly) limited link capacity of the network into account to further strengthen problem \eqref{vnf1}.
As enforced in constraints \eqref{nodecapcons}, the total amount of functions processed at cloud node $v$ is limited by its capacity $\mu_v$, that is, 
the total data rate of these functions cannot exceed $\mu_v$.
In addition to the above, the total amount of functions processed at cloud node $v$ is also limited by the overall capacity of its incoming links and outgoing links.
To be specific, (i) if function $f_{s+1}^{k}$ is processed at cloud node $v$ but function $f_s^k$ is not processed at cloud node $v$, then the total data rate of flow $(k,s)$ on node $v$'s incoming links should be $\lambda_s^k$;
	and the sum of the total data rate of all the above cases should not exceed the overall capacity of node $v$'s incoming links.
(ii) if function $f_s^k$ is processed at cloud node $v$ but function $f_{s+1}^{k}$ is not processed at cloud node $v$, then the total data rate of flow $(k,s)$ on node $v$'s outgoing links should be $\lambda_s^{k}$;
	and the sum of the total data rate of all the above cases should not exceed the overall capacity of node $v$'s outgoing links.
The link-capacity based valid inequalities in this subsection are derived based on the above observations.

\subsubsection{Derived inequalities}
We first consider case (i) in the above.
Observe that all constraints in \eqref{SFC} associated with cloud node $v\in \V$ can be written as
\begin{align}
		& \sum_{j: (j,v) \in \mathcal{{L}}} r_{jv}^{k,s} - \sum_{j: (v,j) \in \mathcal{{L}}} r_{vj}^{k,s} = b_v^{k,s}(\x),\nonumber \\
		&\qquad \qquad\qquad\qquad\qquad\quad ~\forall~s \in \F^k\cup \{0\},~\forall~k \in \K,\label{SFCv}
\end{align}
where 
\begin{equation*}
	b_v^{k,s}(\x) =\left\{
	 \begin{array}{ll}
	 	 x^{k,s+1}_v, & \text{if}~s=0;\\[3pt]
	 	 x^{k,s+1}_v - x^{k,s}_v,&   \text{if}~1 \leq s < \ell_k;\\[3pt]
	 	 -x^{k,s}_{v},~&  \text{if}~s = \ell_k.
	 \end{array}\right.
\end{equation*}
By \eqref{rbounds}, we have $r_{jv}^{k,s}\geq 0$ for all $j$ with $(j,v) \in \CL$ and $r_{vj}^{k,s}\geq 0$ for all $j$ with $(v,j) \in \CL$, which, together with \eqref{SFCv}, implies
\begin{equation}
	\label{tineq1}
	\!\!\!\!\sum_{j :  (j,v)\in \mathcal{L}} r^{k,s}_{jv} \geq \left\{
	\begin{array}{ll}
		x^{k,s+1}_v, & \text{if}~s=0;\\[3pt]
		{(x^{k,s+1}_v - x^{k,s}_v)}^+, &  \text{if}~1 \leq s < \ell_k.
	\end{array}
	\right. 
\end{equation}
Here $(a)^+$ is equal to $a$ if $a > 0$ and $0$ otherwise.
Multiplying \eqref{tineq1} by $\lambda_{s}^k$ and summing them up for all $k \in \mathcal{K}$ and $s\in \F^k\backslash\{\ell_k\}\cup \{0\}$, we obtain 
\rev{
	\begin{equation*}
		\begin{aligned}
			&\sum_{k \in \mathcal{K}} \left( \lambda_0^k x^{k,1}_v + \sum_{s \in \F^k \backslash\{\ell_k\}} \lambda_s^k {(x^{k,s+1}_v - x^{k,s}_v)}^+  \right)  \\
			&\qquad \quad~\stackrel{(a)}{\leq}\sum_{k \in \mathcal{K}} \sum_{s \in \F^k\cup \{0\}\backslash\{\ell_k\}} \lambda_s^k \left (\sum_{j :  (j,v)\in \mathcal{L}} r^{k,s}_{jv}\right) \\
			& \qquad \quad ~=\sum_{j : (j,v)\in \mathcal{L}}\left(\sum_{k \in \mathcal{K}} \sum_{s \in \F^k \cup \{0\}\backslash \{\ell_k\}} \lambda_s^k r^{k,s}_{jv} \right) \\
			&\qquad \quad~\stackrel{(b)}{\leq} 
			\sum_{j : (j,v)\in \mathcal{L}}\left(\sum_{k \in \mathcal{K}} \sum_{s \in \F^k \cup \{0\}} \lambda_s^k r^{k,s}_{jv} \right)
			\stackrel{(c)}{\leq}    \sum_{j  :  (j,v)\in \mathcal{L}} C_{jv},
		\end{aligned}
	\end{equation*}%
	where (a) follows from \eqref{tineq1}; (b) follows from $r^{k, \ell_k}_{jv} \geq 0$ for all $k \in \K$ and $j \in \I$ with $(j,v) \in \CL$; and (c) follows from \eqref{linkcapcons}.}
By \eqref{xyxelation}, if $y_v=0$, then $x_{v}^{k,s}=0$ for all $k \in \K$ and $s \in \F^k$, and thus 
the above inequality can be further strengthened as 
\begin{align}
	& \sum_{k \in \mathcal{K}} \left( \lambda_0^k x^{k,1}_v + \sum_{s \in \F^k\backslash\{\ell_k\}}  \lambda_s^k {(x^{k,s+1}_v - x^{k,s}_v)}^+  \right)\nonumber\\
	&  \qquad \qquad\qquad\qquad\qquad\leq  \sum_{j  : (j,v)\in \mathcal{L}} C_{jv} y_v,~\forall~v \in \mathcal{V}.	\label{validineq1}
\end{align}

Next, we consider case (ii). 
Similarly, from \eqref{SFCv}, we can get
\begin{equation}\label{tineq2}
		\!\!\!\! \sum_{j: (v,j) \in \mathcal{{L}}} r^{k,s}_{vj} \geq \left\{
	\begin{array}{ll}
		\!\!\! {(x^{k,s}_v-x^{k,s+1}_v )}^+, & 	\!\! \text{if}~1 \leq s < \ell_k;\\[3pt]
			\!\!\!x^{k,s}_v, & 	\!\!\text{if}~s=\ell_k.
	\end{array}
	\right. 
\end{equation}
Using the same argument as deriving \eqref{validineq1}, we can obtain
\begin{align}
	&\sum_{k \in \mathcal{K}} \left( \lambda_{\ell_k}^k x_{v}^{k,\ell_k} + \sum_{s \in \F^k \backslash\{\ell_k\}} \lambda_s^k {(x^{k,s}_v - x^{k,s+1}_v)}^+  \right)\nonumber \\
	& \qquad\qquad\qquad\qquad\qquad  \leq   \sum_{j  :  (v,j)\in \mathcal{L}} C_{vj} y_v, ~\forall~ v\in \mathcal{V}.\label{validineq2}
\end{align}%
\rev{Adding inequalities \eqref{validineq1} and \eqref{validineq2} into \eqref{vnf1} yields a stronger relaxation of problem \eqref{ns}: 
	\begin{equation}
		\label{vnf2-1}
		\begin{aligned} {\minimize_{\substack{\revvv{\x,\,\y}}}}~&  \sum_{v \in \mathcal{V}}p_v y_v + \sum_{v\in\mathcal{V}} \sum_{k \in \mathcal{K}} \sum_{s \in \F^k} c^{k,s}_v x^{k,s}_v  \\
			{\revv{\text{subject to}}~}&  \eqref{onlyonenode}\text{--}\eqref{nodecapcons},~\eqref{ybounds},~\eqref{xbounds},~ \eqref{SFCineq-1},~ \eqref{SFCineq-3},~ \eqref{validineq1},~\eqref{validineq2}.
		\end{aligned}
	\end{equation}}
	
	\rev{Problem \eqref{vnf2-1} is a mixed integer nonlinear programming problem due to the nonlinear terms  ${(x^{k,s+1}_v - x^{k,s}_v)}^+$ and ${(x^{k,s}_v - x^{k,s+1}_v)}^+$ in \eqref{validineq1} and \eqref{validineq2}.
	However, they can be equivalently linearized.
	To do so, we first note that by $x_v^{k,s+1}, x_v^{k,s} \in \{0,1\}$,  ${(x^{k,s+1}_v - x^{k,s}_v)}^+$,  ${(x^{k,s}_v - x^{k,s+1}_v)}^+\in \{0,1\}$ must hold.
	Therefore, we can introduce binary variables
	\begin{align}
		& z^{k,s}_v \in \{0,1\},~\forall~v \in \mathcal{V},~\forall~s \in \F^k \backslash\{\ell_k\}, ~ \forall~k \in \mathcal{K}, \label{validineq11} \\
		& w^{k,s}_v \in \{0,1\},~\forall~v \in \mathcal{V},~\forall~s \in   \F^k \backslash\{\ell_k\}, ~\forall~k \in \mathcal{K}, \label{validineq21}
	\end{align}
	to denote 
	\begin{align}
		& z_v^{k,s}={(x^{k,s+1}_v - x^{k,s}_v)}^+,\,\forall\,v \in \mathcal{V},\,\forall\,s \in \F^k \backslash\{\ell_k\}, \, \forall\,k \in \mathcal{K}, \label{validineq10} \\
		& w_v^{k,s}={(x^{k,s}_v - x^{k,s+1}_v)}^+,\,\forall\,v \in \mathcal{V},\,\forall\,s \in   \F^k \backslash\{\ell_k\}, \,\forall\,k \in \mathcal{K}, \label{validineq20}
	\end{align}
	and rewrite \eqref{validineq1} and \eqref{validineq2} as the following constraints 
	\begin{align}
		& \sum_{k \in \mathcal{K}} \left( \lambda_0^k x^{k,1}_v + \sum_{s \in \F^k \backslash\{\ell_k\}} \lambda_s^k z^{k,s}_v  \right)\nonumber \\
		& \qquad\qquad\qquad\qquad\qquad \leq   \sum_{j  :  (j,v)\in \mathcal{L}} C_{jv} y_v,~\forall~ v\in \mathcal{V},\label{validineq5}\\
		& \sum_{k \in \mathcal{K}} \left( \lambda_{\ell_k}^k x^{k,\ell_k}_v + \sum_{s \in \F^k \backslash\{\ell_k\}} \lambda_s^k w^{k,s}_v  \right)\nonumber\\
		& \qquad\qquad\qquad\qquad\qquad \leq   \sum_{j  :  (v,j)\in \mathcal{L}} C_{vj} y_v, ~\forall~ v\in \mathcal{V}.\label{validineq6}
	\end{align}}%
	As a result, problem \eqref{vnf2-1} is equivalent to 
	\begin{equation}
		\label{vnf2-2}
		\begin{aligned} \revv{\minimize_{\substack{\x,\,\y,\, \z,\,\w}}}~&  \sum_{v \in \mathcal{V}}p_v y_v + \sum_{v\in\mathcal{V}} \sum_{k \in \mathcal{K}} \sum_{s \in \F^k} c^{k,s}_v x^{k,s}_v  \\
			{\revv{\text{subject to}}~}&  \eqref{onlyonenode}\text{--}\eqref{nodecapcons},~\eqref{ybounds},~\eqref{xbounds},~ \eqref{SFCineq-1},~ \eqref{SFCineq-3},~\eqref{validineq11}\text{--}\eqref{validineq6}.
		\end{aligned}
	\end{equation}
	We can further linearize constraints \eqref{validineq10} and \eqref{validineq20} as  
	\begin{align}
		& z^{k,s}_v \geq x^{k,s+1}_v - x^{k,s}_v,~\forall~v \in \mathcal{V}, ~\forall~s \in   \F^k \backslash\{\ell_k\},~\forall~k \in \mathcal{K}, \label{validineq3} \\
		& w^{k,s}_v \geq x^{k,s}_v - x^{k,s+1}_v, ~\forall~v \in \mathcal{V}, ~\forall~s \in   \F^k \backslash\{\ell_k\}, ~\forall~k \in \mathcal{K},\label{validineq4}
	\end{align}
	and obtain an equivalent \MBLP formulation as follows:
	\begin{equation}
		\label{vnf2}
		\tag{FP-II}
		\begin{aligned} \revv{\minimize_{\substack{\x,\,\y,\, \z,\,\w}}}~&  \sum_{v \in \mathcal{V}}p_v y_v + \sum_{v\in\mathcal{V}} \sum_{k \in \mathcal{K}} \sum_{s \in \F^k} c^{k,s}_v x^{k,s}_v  \\
			{\revv{\text{subject to}}~}&  \eqref{onlyonenode}\text{--}\eqref{nodecapcons},~\eqref{ybounds},~\eqref{xbounds},~ \eqref{SFCineq-1},~ \eqref{SFCineq-3},\\
			& \eqref{validineq11},~\eqref{validineq21},~\eqref{validineq5},~\eqref{validineq6},~\eqref{validineq3},~\eqref{validineq4}.
		\end{aligned}
	\end{equation}
	Indeed, constraints  \eqref{validineq10} and \eqref{validineq20} imply 
	\eqref{validineq3} and \eqref{validineq4}, respectively.
	On the other hand, due to the reason that variables $\{z_v^{k,s}\}$ and $\{w_v^{k,s}\}$ do not appear in the objective function of \eqref{vnf2}, there always exists an optimal solution of problem \eqref{vnf2} such that \eqref{validineq10} and \eqref{validineq20} hold.
	Therefore, problems \eqref{vnf2-2} and \eqref{vnf2} are equivalent.

\subsubsection{Effectiveness of derived inequalities}
To illustrate the effectiveness of inequalities \rev{\eqref{validineq1} and \eqref{validineq2} (or their linearization \eqref{validineq11}, \eqref{validineq21}, \eqref{validineq5}, \eqref{validineq6}, \eqref{validineq3}, and \eqref{validineq4})} in strengthening the \FP problem, let us consider a special case in which $\ell_k=1$ for all $k \in \K$ in \cite{Addis2020}. 
In this case,
the nonlinear terms $\sum_{s \in \F^k\backslash\{\ell_k\}}  \lambda_s^k {(x^{k,s+1}_v - x^{k,s}_v)}^+$ and  $\sum_{s \in \F^k \backslash\{\ell_k\}} \lambda_s^k {(x^{k,s}_v - x^{k,s+1}_v)}^+$ in \eqref{validineq1} and \eqref{validineq2}
can be removed, and thus we do not need to introduce variables $\{z_{v}^{k,s}\}$ and $\{w_{v}^{k,s}\}$.
Moreover, for this case, inequalities \eqref{nodecapcons}, \eqref{validineq1}, and \eqref{validineq2} can be combined into
\begin{equation}
	\label{s1validineq}
	\sum_{k \in \mathcal{K}} \lambda_0^k x^{k,1}_v \leq {\min}\left\{ \mu_v, \sum_{j  :  (j,v)\in \mathcal{L}} C_{jv}, \sum_{j : (v,j)\in \mathcal{L}} C_{vj} \right \}y_v.
\end{equation}
Notice that \eqref{s1validineq} is potentially much stronger than \eqref{nodecapcons}, especially when the overall capacity of node $v$'s incoming links or outgoing links is much less than the computational capacity of node $v$.
As such, problem \eqref{vnf2} can be potentially much stronger than problem \eqref{vnf1} in terms of providing a better relaxation of problem \eqref{ns}.
See the following example for an illustration.

\begin{example}
	Consider the toy example in Fig. \ref{example2}.
	Nodes $B$ and $C$ are cloud nodes whose computational capacities are $\mu_B=\mu_C =2$. 
	The link capacities are $C_{AB}=C_{CD} =1$ and $C_{AC}=C_{BD}=2$.
	There is only a single function $f$ and both the  cloud nodes can process this function. 
	Moreover, the power consumptions of placing the function at all cloud nodes are $0$ and 
	the power consumptions of activating the two cloud nodes are $p_B=1$ and $p_C = 2$, respectively.
	
	\begin{figure}[h]
		\begin{center}
			\begin{tikzpicture}[->,>=stealth',shorten >=0pt,auto,node distance=7cm,
				semithick]
				\tikzstyle{every state}=[fill=blue,draw=none,scale=0.25,text=white]	
				\node[state,label=above:{\footnotesize$A$}] (A)                  {};
				\node[state,label=above:{\footnotesize$B (2)$},rectangle]         (B) [above right = 0.7cm and 3cm of A] {};
				\node[state,label=above:{\footnotesize$C (2)$},rectangle]         (C) [below right  = 0.7cm and 3cm of A] {};
				\node[state,label=above:{\footnotesize$D$}]         (D) [right= 6cm of A] {};
				
				\path
				(A) edge             node {\footnotesize(1)} (B)
				(A) edge 			 node {\footnotesize(2)} (C)
				(B) edge 			 node {\footnotesize(2)} (D)
				(C) edge 			 node {\footnotesize(1)} (D)
				;
			\end{tikzpicture}
			\caption{A toy network example where the rectangular nodes are cloud nodes and the circular nodes are the source and destination of the services.
				$(a)$ and $(b)$ over each cloud node and each link denote that the node  and link capacities are $a$ and $b$, respectively.}
			\label{example2}
		\end{center}
	\end{figure}
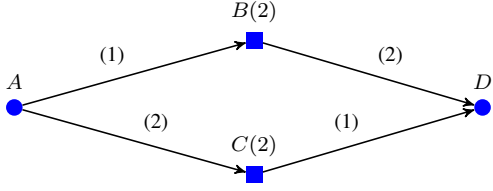
	Suppose that there are two services with their source and destination being $A$ and $D$, respectively, their SFC being $f$, and their data rates being $1$.
	Clearly, the objective value of problem \eqref{ns} is equal to $3$.
	However, solving problem \eqref{vnf1} will return a solution with the functions of both services being processed at cloud node $B$, yielding an objective value  $1$.
	Solving problem \eqref{vnf2} with the tighter constraints \eqref{s1validineq}, i.e., 
	\begin{align}
		\revvv{x_B^{1} + x_{B}^{2}} \leq {\min}\left\{ \mu_B, C_{AB}, C_{BD} \right \}y_B= y_B,\\
		\revvv{x_C^{1} + x_{C}^{2}} \leq {\min}\left\{ \mu_C, C_{AC}, C_{CD} \right \}y_C= y_C,
	\end{align}
	will yield a solution with an objective value  being $3$ (in which both cloud nodes are activated, as to process the  functions of the two services).
	This example clearly shows that adding constraints  \eqref{s1validineq} (or \rev{\eqref{validineq11}, \eqref{validineq21}, \eqref{validineq5}, \eqref{validineq6}, \eqref{validineq3}, and \eqref{validineq4}} for the general case) can effectively strengthen the \FP problem in terms of providing a better relaxation bound of problem \eqref{ns}.
\end{example}

The strongness of relaxation \eqref{vnf2} makes it much more suitable to be embedded into the proposed \rev{\cref{alg1}} than relaxation \eqref{vnf1}.
In particular, with this stronger relaxation \eqref{vnf2},  the proposed \rev{\cref{alg1}} converges much faster, thereby enjoying a much better overall performance; see \cref{subsec:ineq} further ahead.


\section{Numerical Results}
\label{sect:numres}

In this section, we present numerical results to demonstrate the effectiveness and efficiency of the proposed valid inequalities and decomposition algorithm.
More specifically, we first perform numerical experiments to illustrate the effectiveness of the proposed valid inequalities for the proposed decomposition algorithm in \cref{subsec:ineq}.
Then, in \cref{subsec:proposedalg}, we present numerical results to demonstrate the effectiveness and efficiency of  the proposed \rev{\CBD} algorithm over  the \SOTA approaches in \rev{\cite{Chen2021,Chowdhury2012,Chen2023,Zhang2017}}.
%

{We use {CPLEX 20.1.0} to solve all LP and \MBLP problems. }
When solving the \MBLP problems, the time limit was set to 3600 seconds.
All experiments were performed on a server with 2 Intel Xeon E5-2630 processors and 98 GB of RAM, using the Ubuntu GNU/Linux Server {20.04} x86-64 operating system.


All algorithms were tested on the fish network \cite{Zhang2017}, which consists of 112 nodes and 440 links, including 6 cloud nodes. 
In order to test the effectiveness of connectivity based inequalities, 
we randomly remove some links so that there does not exist a path between some pair of two nodes in the network (i.e., $\V(v_0) \neq \emptyset$ for some $v_0 \in \V$). Each link is removed with probability $0.1$.
The cloud nodes' and links' capacities are randomly generated within $ [200,600] $ and $ [20,220] $, respectively.
For each service $k$, node $S(k)$ is randomly chosen from the available nodes and node $D(k)$ is set to be the common destination node; SFC $ \mathcal{F}(k) $ is a sequence of $4$ functions randomly generated from $ \{f^1,f^2, \ldots, f^5\} $; and $ \lambda^k_s $'s are the service function rates all of which are set to be the same integer value, chosen from $\{1, 2, \ldots, 40\}$.
The cloud nodes are randomly chosen to process $3$ functions of $ \{f^1, f^2, \ldots, f^5\} $.
The power consumptions of the activation of cloud nodes and the allocation of functions to cloud nodes are randomly generated with $[1, 200]$ and $[1,20]$, respectively.
The above parameters are carefully chosen to ensure that the constraints in problem \eqref{ns} are neither too tight nor too loose.
For each fixed number of services, $1000$ problem instances are randomly constructed  and the results reported below are averaged over these instances.

\subsection{Effectiveness of Proposed Valid Inequalities}
\label{subsec:ineq}

In this subsection, we illustrate the effectiveness of our proposed connectivity based inequalities  \eqref{SFCineq-1} and \eqref{SFCineq-3} and link-capacity based inequalities \rev{\eqref{validineq11}, \eqref{validineq21}, \eqref{validineq5}, \eqref{validineq6}, \eqref{validineq3}, and \eqref{validineq4}}.
To do this, we first compare the optimal values of
relaxations \eqref{vnf}, \eqref{vnf1}, and \eqref{vnf2}.
We compare the relative optimality gap improvement, defined by
\begin{equation}
	\label{eq1}
 \frac{\nu(\text{FP-x})-\nu(\text{FP})}{\nu(\text{NS})-\nu(\text{FP})}.
\end{equation}
where $\text{x} \in \{\text{I}, \text{II}\}$ and $\nu(\cdot)$ denotes the optimal objective value of the corresponding problem.
The gap improvement in \eqref{eq1} quantifies the improved tightness of relaxations \eqref{vnf1} and  \eqref{vnf2} over that of relaxation \eqref{vnf}, and thus 
the effectiveness of proposed connectivity based inequalities \eqref{SFCineq-1} and \eqref{SFCineq-3}, and link-capacity based inequalities  \rev{\eqref{validineq11}, \eqref{validineq21}, \eqref{validineq5}, \eqref{validineq6}, \eqref{validineq3}, and \eqref{validineq4}} in reducing the gap between the \FP problem and the \NS problem.
The larger the gap improvement, the stronger the relaxations \eqref{vnf1} and \eqref{vnf2} (as compared with relaxation \eqref{vnf}), and the more effective the proposed inequalities.
Note that the  gap improvement is a widely used performance measure in the integer programming community \cite{Vielma2010,Fukasawa2011} to show the tightness of a  relaxation problem over another one.

\begin{figure}[t]
	\centering
	\includegraphics[height=\figuresize]{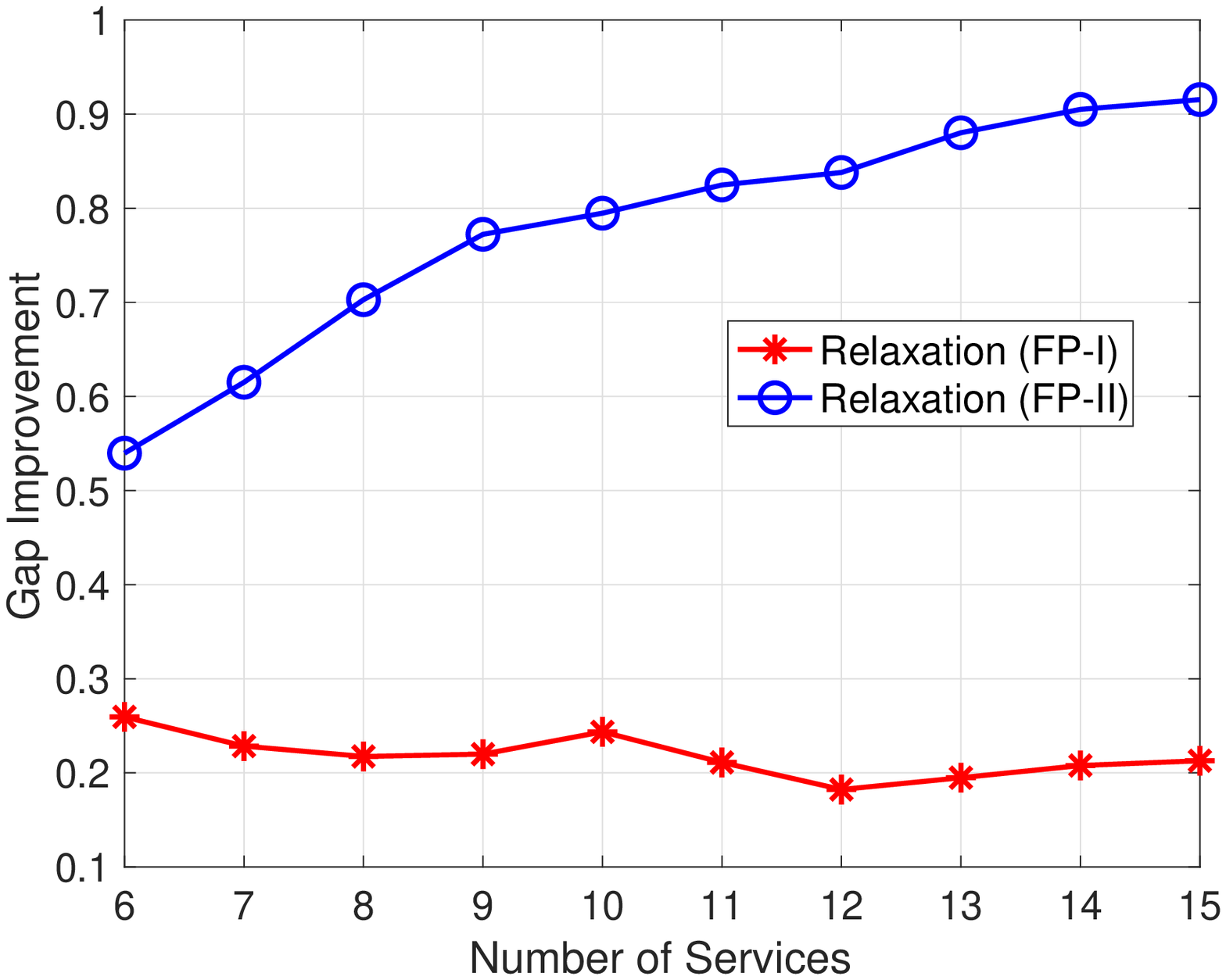}
	\caption{The average relative optimality gap improvement of relaxation  \eqref{vnf1} and \eqref{vnf2}.}
	\label{gap}
\end{figure}

\begin{figure}[t]
	\centering
	\includegraphics[height=\figuresize]{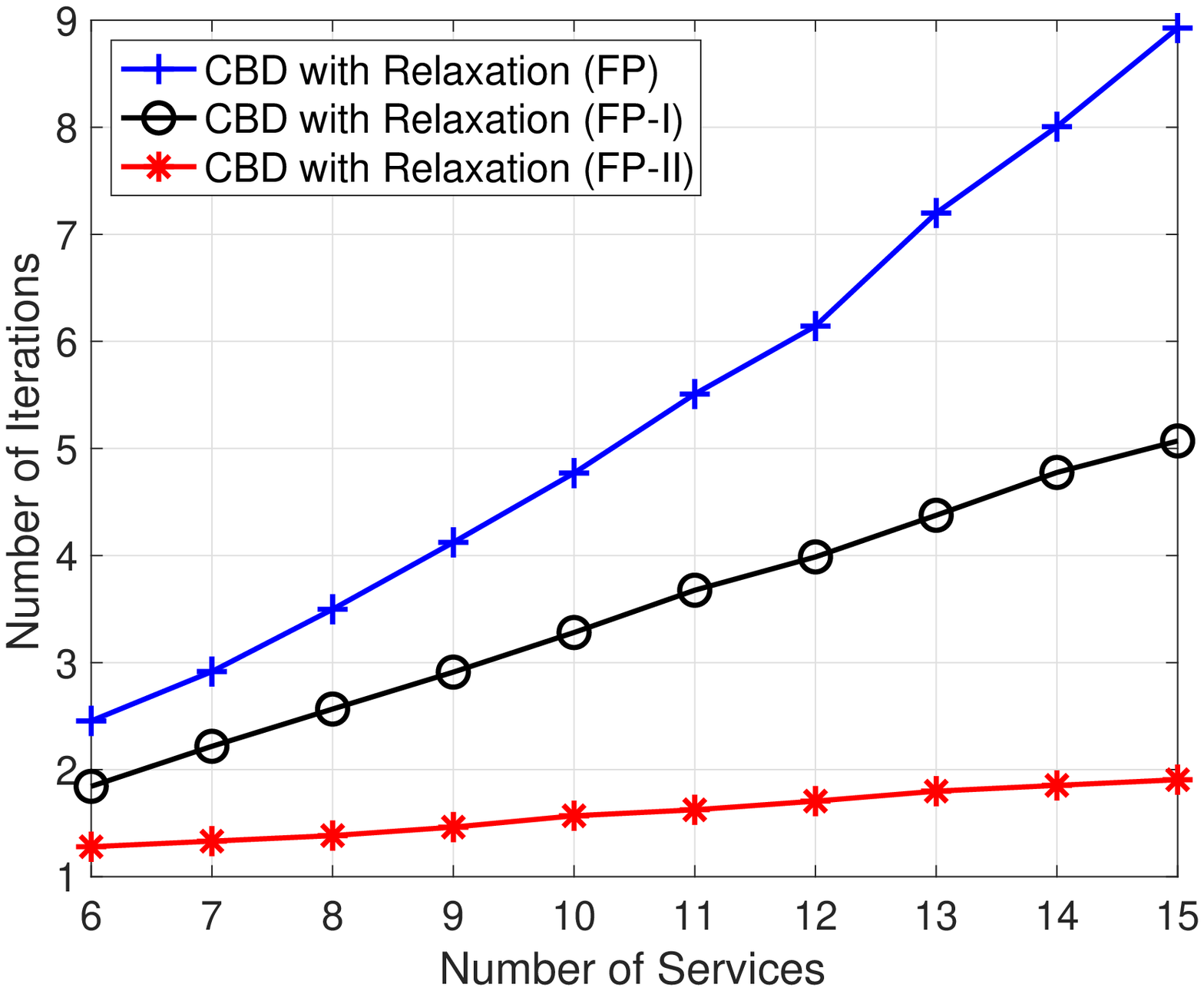}
	\caption{The average number of iterations needed by  \rev{\CBDs} with relaxations \eqref{vnf}, \eqref{vnf1}, and \eqref{vnf2}, respectively.}
	\label{iter}
\end{figure}

\begin{figure}[t]
	\centering
	\includegraphics[height=\figuresize]{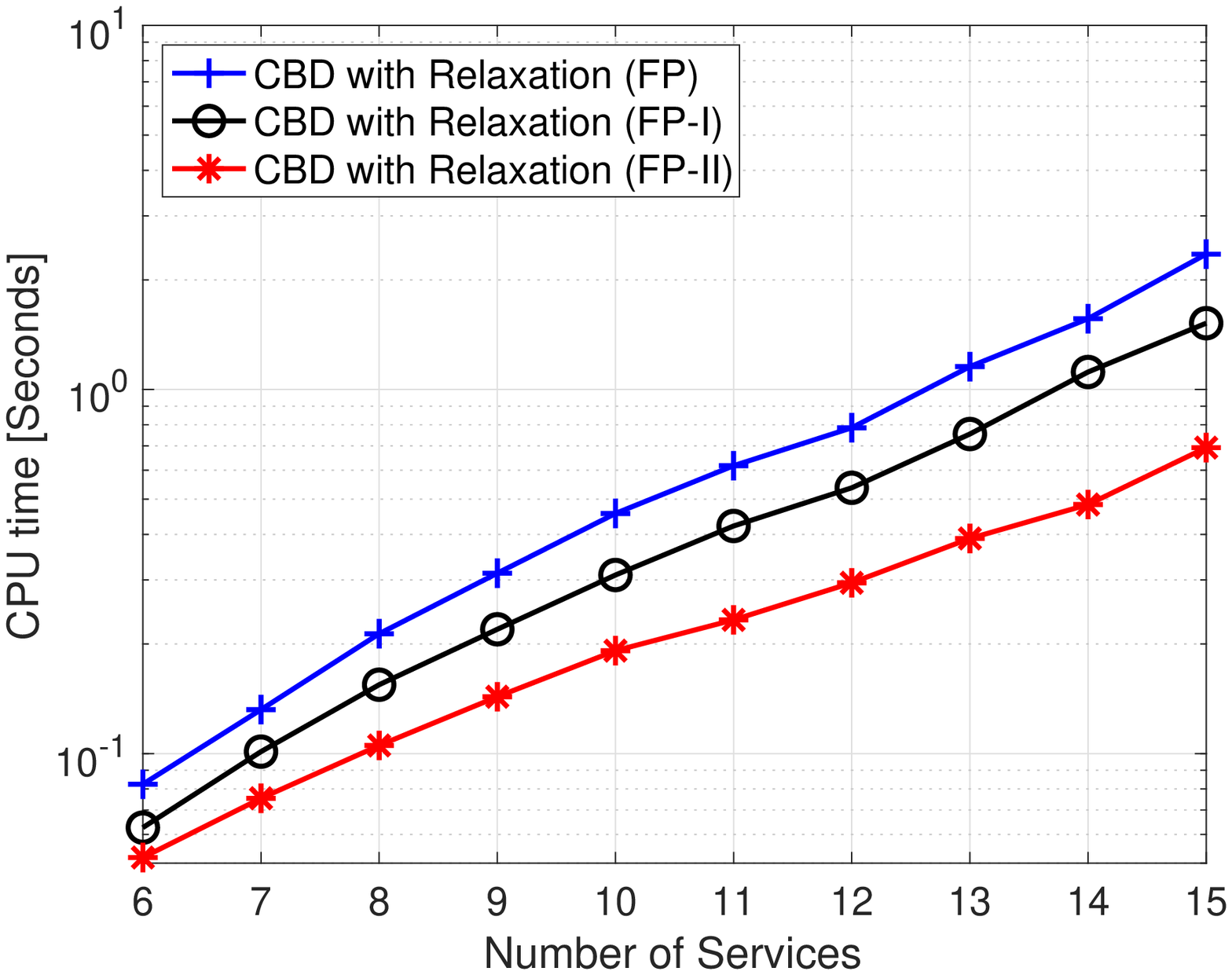}
	\caption{The average CPU time taken by   \rev{\CBDs}  with relaxations \eqref{vnf}, \eqref{vnf1}, and \eqref{vnf2}, respectively.}
	\label{BDtime}
\end{figure}

\cref{gap} plots the average  gap improvement versus different numbers of services.
As can be observed in \cref{gap}, in all cases, the gap improvement is larger than $0.1$, which shows that \eqref{vnf1} and \eqref{vnf2} are indeed stronger than \eqref{vnf}.
Compared with that of \eqref{vnf1}, the gap improvement of \eqref{vnf2} is much larger.
This indicates that the link-capacity based inequalities \rev{\eqref{validineq11}, \eqref{validineq21}, \eqref{validineq5}, \eqref{validineq6}, \eqref{validineq3}, and \eqref{validineq4}} can indeed significantly strengthening the FP problem only with the connectivity based inequalities \eqref{SFCineq-1} and \eqref{SFCineq-3}.
In addition, with the increasing of the number of services, the gap improvement of \eqref{vnf2} becomes larger. 
This can be explained as follows. 
As the number of services increases, the left-hand sides of inequalities \eqref{validineq5} and \eqref{validineq6} become larger, thereby making inequalities  \eqref{validineq5} and \eqref{validineq6} tighter. 
Consequently,   the gap improvement achieved by the link-capacity based inequalities is also likely to be larger.

Next, we illustrate the efficiency of the proposed connectivity based inequalities  \eqref{SFCineq-1} and \eqref{SFCineq-3} and link-capacity based inequalities \rev{\eqref{validineq11}, \eqref{validineq21}, \eqref{validineq5}, \eqref{validineq6}, \eqref{validineq3}, and \eqref{validineq4}} for accelerating the convergence of the \rev{\rev{\CBD}} algorithm (called \rev{\CBD}).
\cref{iter,BDtime} plot the average numbers of iterations and CPU times needed by the proposed \rev{\CBDs} with relaxations  \eqref{vnf}, \eqref{vnf1}, and \eqref{vnf2}, respectively, to terminate.
As observed, both of the derived families of valid inequalities can improve the performance of the proposed \rev{\CBD}. 
In particular, with the two proposed inequalities, the average number of iterations needed for the convergence of \rev{\CBD} is much smaller (less than $2$); and thus the CPU time taken by \rev{\CBD} is also much smaller.

From the above computational results, we can conclude that  the proposed \rev{\CBD} with relaxation \eqref{vnf2} (i.e., the \FP problem with the two families of valid inequalities) significantly outperforms that with relaxation \eqref{vnf} (i.e., the \FP problem without the two families of valid inequalities).
Due to this, we shall only test and compare the proposed \rev{\CBD} with relaxation \eqref{vnf2} with the other existing algorithms in the following.

\subsection{Comparison of \rev{\CBD} with \SOTA Algorithms}
\label{subsec:proposedalg}

\begin{figure}[t]
	\centering
	\includegraphics[height=\figuresize]{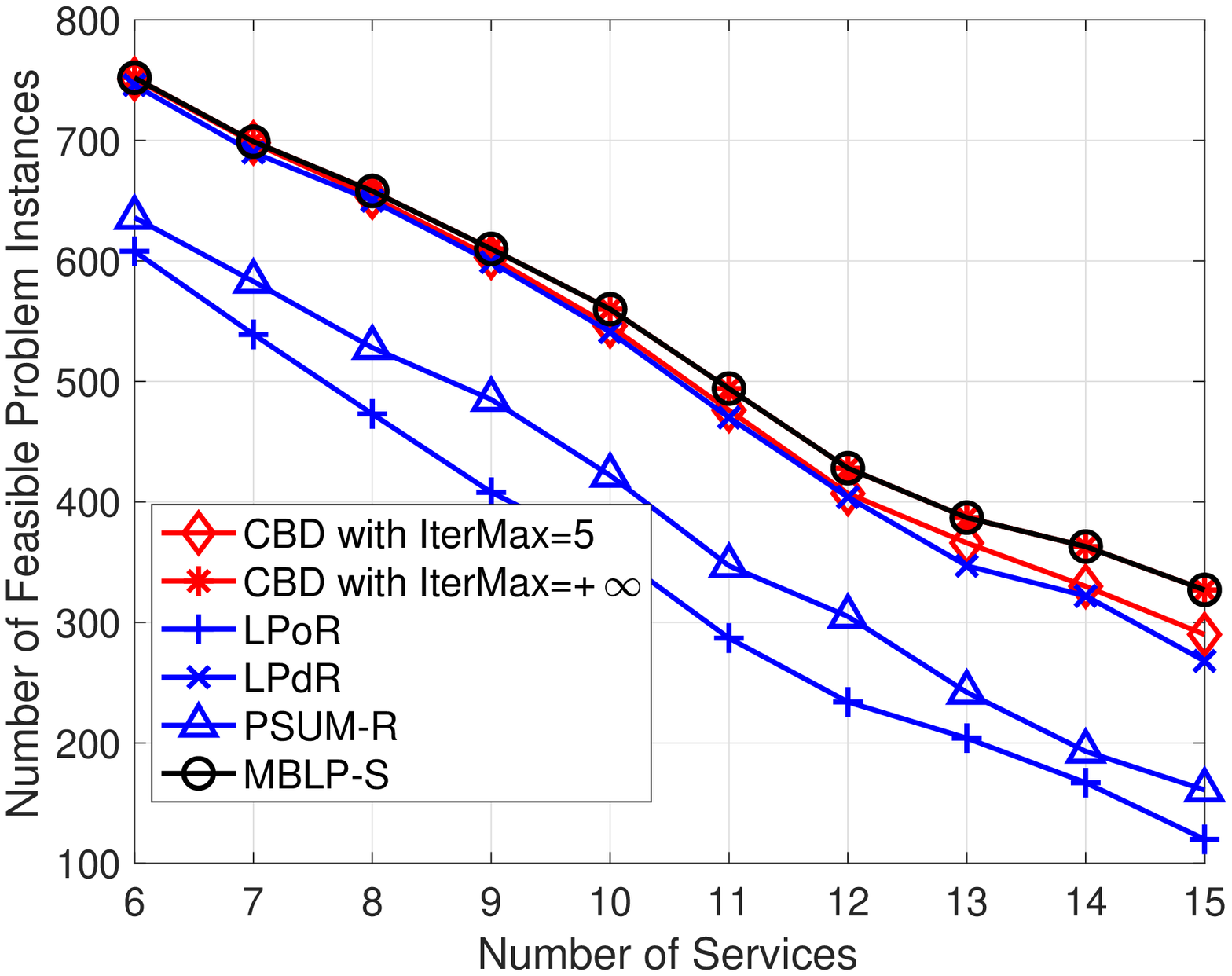}
	\caption{The number of feasible problem instances solved by \rev{\CBDs} with $\IterMax=5,\,+\infty$, \rev{\LPoR, \LPdR}, \PSUMR, and \EXACT.}
	\label{nfeas}
\end{figure}

\begin{figure}[t]
	\centering
	\includegraphics[height=\figuresize]{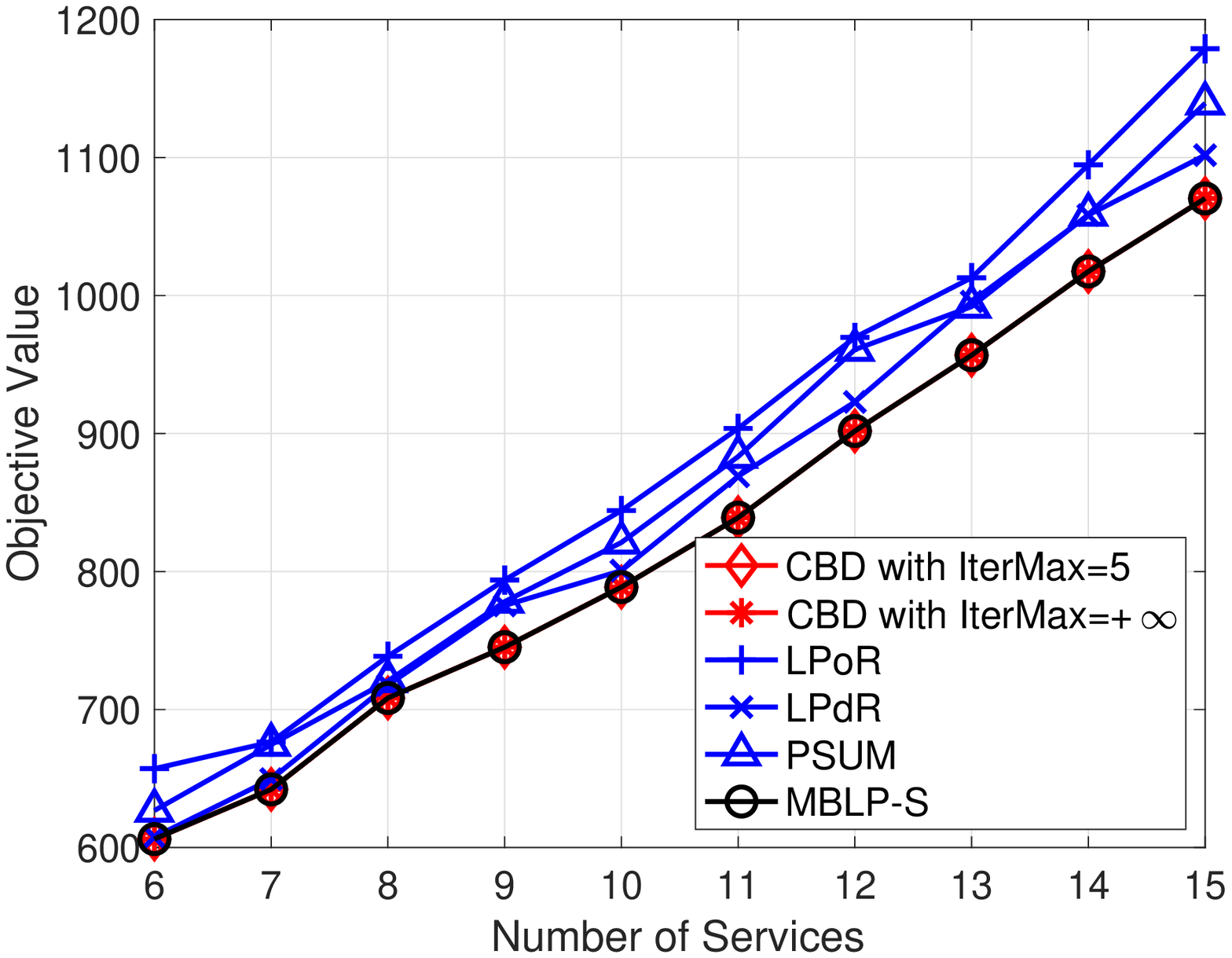}
	\caption{The average objective value returned by \rev{\CBDs} with $\IterMax=5,\,+\infty$, \rev{\LPoR, \LPdR}, \PSUMR, and \EXACT.}
	\label{obj}
\end{figure}

\begin{figure}[t]
	\centering
	\includegraphics[height=\figuresize]{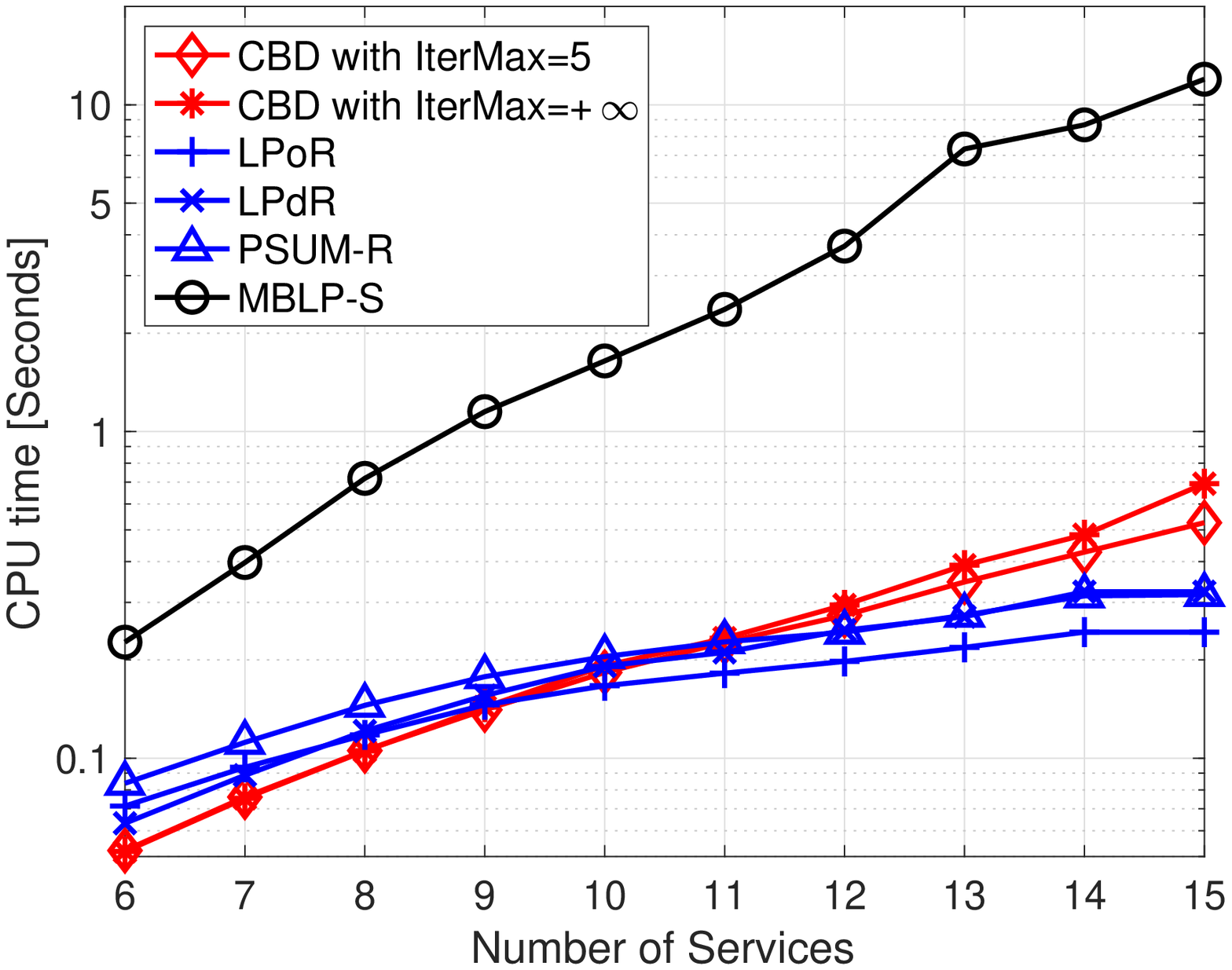}
	\caption{The average CPU time returned by \rev{\CBDs} with $\IterMax=5,\,+\infty$, \rev{\LPoR, \LPdR}, \PSUMR, and \EXACT.}
	\label{time}
\end{figure}

In this subsection, we compare the performance of the proposed \rev{\CBD algorithm} with \SOTA algorithms such as the exact approach using standard \MBLP solvers (called \EXACT) in \cite{Chen2021}, the LP \rev{one-shot} rounding (\rev{\LPoR}) algorithm in \cite{Chowdhury2012}, \rev{the LP dynamic rounding (\LPdR) algorithm in \cite{Chen2023}}, and the penalty successive upper bound minimization with rounding (\PSUMR)  algorithm in \cite{Zhang2017}.

\cref{nfeas,obj} plot the number of feasible problem instances and the average objective value returned by
the proposed \rev{\CBDs} with $\IterMax=5,\,+\infty$, \rev{\LPoR, \LPdR}, \PSUMR, and \EXACT, respectively.
When setting $\IterMax=5$, the proposed \rev{\CBD} cannot be guaranteed to find an optimal solution of problem \eqref{ns}, and the red-diamond curve in \cref{nfeas} is obtained as follows:
for a problem instance, if the proposed \rev{\CBD} can find an optimal solution of problem \eqref{ns} within 5 iterations, it is counted as a feasible instance; otherwise, it is infeasible.   
From \cref{nfeas,obj}, we can observe the effectiveness of the proposed \rev{\CBDs} over \rev{\LPoR, \LPdR}, and \PSUMR.
In particular, compared with  \rev{\LPoR, \LPdR}, and \PSUMR, the proposed  \rev{\CBDs} can find a feasible solution for much more problem instances \rev{(especially for instances with a large number of services)} and return a much better solution with a much smaller objective value.
{The latter is reasonable, as stated in \rev{\cref{subsec:analysis}}, \rev{\CBDs} must return an optimal solution as long as they find a feasible solution.}
Another simulation finding is that with the increasing of $\IterMax$, \rev{\CBD} can return a feasible solution for more problem instances.
Nevertheless, when $\IterMax=5$, \rev{\CBD} can return a feasible solution for \rev{most} instances (as \rev{\CBD} with $\IterMax=+\infty$ is able to find all truly feasible problem instances and the difference between the numbers of feasible problem instances solved by \rev{\CBDs} with $\IterMax=5$ and $\IterMax = +\infty$ is small in \cref{nfeas}).
This shows that in most cases, \rev{\CBD} can return an optimal solution within a few number of iterations, thanks to the two derived families of valid inequalities.

The comparison of the solution efficiency of the proposed \rev{\CBDs} with $\IterMax=5,\,+\infty$, \rev{\LPoR, \LPdR}, \PSUMR, and \EXACT is plotted in \cref{time}. 
We can observe from \cref{time} that though both \EXACT and the proposed \rev{\CBD} with $\IterMax=+\infty$ can return an optimal solution (if the problem is feasible),  \rev{\CBD} with $\IterMax=+\infty$ is much more computationally efficient than \EXACT.
In particular, the average CPU times taken by \rev{\CBD} with $\IterMax=+\infty$ are less than $0.7$ seconds, while the CPU times taken by \EXACT in the large-scale cases (i.e., $|\mathcal{K}|\geq 13$) are generally larger than $5$ seconds.
The solution efficiency of the proposed  \rev{\CBD} with $\IterMax= 5$,  \rev{\LPoR, \LPdR}, and  \PSUMR is comparable. 
In general, \rev{\LPoR} performs the best in terms of the CPU time, followed by \rev{\CBD} with $\IterMax=5$ \rev{, \LPdR, and \PSUMR}.

In summary, our simulation results show both of the effectiveness and efficiency of the proposed \rev{\CBD}. 
More specifically, compared with the \rev{\LPoR, \LPdR}, and \PSUMR algorithms in \cite{Chowdhury2012}, \rev{\cite{Chen2023} and} \cite{Zhang2017}, the proposed \rev{\CBD} is able to find a much better solution (due to the information feedback between the two subproblems);
compared with the exact approach in \cite{Chen2021}, the proposed \rev{\CBD} is significantly more computationally efficient, which is because of its decomposition nature.


\section{Conclusions}
\label{sect:conclusion}

In this paper, we have proposed an efficient \rev{\CBD} algorithm for solving large-scale \NS problems.
The proposed algorithm decomposes the large-scale \NS problem into two relatively easy \FP and \TR subproblems and solves them in an iterative fashion enabling a useful information feedback between them (to deal with the coupled constraints) until an optimal solution of the problem is found. 
To further improve the efficiency of the proposed  algorithm, we have also developed two families of valid inequalities to speed up its convergence by judiciously taking the special structure of the considered problem into account.
Two key features of the proposed algorithm, which make it particularly suitable to solve the large-scale \NS problems, are as follows. 
First, it is guaranteed to find \rev{a globally optimal} solution of the NS problem if the number of iterations between the two subproblems is allowed to be sufficiently large and it can return a much better solution than existing two-stage heuristic algorithms even though the number of iterations between the subproblems is small (e.g., 5).
Second, the \FP and \TR subproblems in the proposed  algorithm are a small \MBLP problem and a polynomial time solvable \LP problem, respectively, which are much easier to solve than the original problem. 
Simulation results show the effectiveness and efficiency of the proposed algorithm over the existing SODA algorithms in \red{\cite{Chen2021,Chowdhury2012,Chen2023,Zhang2017}}.
In particular, when compared with algorithms based on LP relaxations \red{\cite{Chowdhury2012,Chen2023,Zhang2017}}, the proposed algorithm is able to find a much better solution; when compared with the  algorithm needed to call \MBLP solvers \cite{Chen2021}, the proposed algorithm is much more computationally efficient.

\revv{
As the future work, it would be interesting to extend the proposed \CBD algorithm to solve the virtual network embedding (\VNE) problem in \cite{Vassilaras2017,Fischer2013}, which has wide applications in wireless and optical communication networks. 
The \VNE problem in \cite{Vassilaras2017,Fischer2013} is a variant of the \NS problem that requires the traffic flow between the two nodes hosting two adjacent functions to be routed via a \emph{single} path.
To develop a \CBD algorithm to solve the \VNE problem, we need to judiciously design an algorithm for solving the traffic routing subproblem (which is an integer programming problem), enabling to find a valid inequality to cut off the current infeasible solution.
}



%

%

\appendices

\section{Proof of \cref{thm1}}
\label{appendixProof}

Since inequalities \eqref{SFCineq-1}--\eqref{SFCineq-2} include inequalities \eqref{SFCineq1}--\eqref{SFCineq3}, it follows that
$\X^2 \subseteq \X^1$ and $\X_{\LIN}^2 \subseteq \X_{\LIN}^1$.
Next, we prove $\X^3 \subseteq \X^2$ and $\X_{\LIN}^3 \subseteq \X_{\LIN}^2$.
First, adding \eqref{SFCineq-3} for all $s'=s, \ldots, s_0-1$ ($s_0 \geq s+1$) yields 
\begin{equation*}
	\sum_{s'=s}^{s_0-1}\sum_{v \in \V \backslash \V(v_0)} x^{k,s'}_v \leq 	\sum_{s'=s}^{s_0-1}\sum_{v \in \V \backslash \V(v_0)} x_v^{k,s'+1},
\end{equation*}
which is equivalent to 
\begin{equation}
	\sum_{v \in \V \backslash \V(v_0)} x^{k,s}_v \leq \sum_{v \in \V \backslash \V(v_0)} x_v^{k,s_0}. \label{tmpineq1}
\end{equation}
By \eqref{onlyonenode}, we have 
\begin{equation}\label{tmpineq2}
	\sum_{v \in \V} x_v^{k,s_0} = 1.
\end{equation}
Combining \eqref{tmpineq1} and \eqref{tmpineq2} shows 
\begin{equation}\label{tmpineq3}
	\sum_{v \in \V \backslash \V(v_0)} x^{k,s}_v + \sum_{v \in \V(v_0)} x_v^{k,s_0}\leq 1. 
\end{equation}
By the definition of $\V(v_0)$ in \eqref{v0def}, we have $v_0 \in \V \backslash \V(v_0)$. 
As a result, \eqref{tmpineq3} implies \eqref{SFCineq-2}, which shows  $\X^3 \subseteq \X^2$ and $\X_{\LIN}^3 \subseteq \X_{\LIN}^2$.

To show the theorem, it suffices to prove $\X^1 \subseteq \X^3$.
Next, we show the desirable result via verifying that \eqref{SFCineq-1} and \eqref{SFCineq-3} hold for any $\x \in \X^1$. 
We first show that \eqref{SFCineq-1} for $k \in \K$, $v \in \V(S^k)$, and $s \in \F^k$ hold at $\x$ by an induction of $s$.
By \eqref{SFCineq1}, $x_{v}^{k,s}=0$ holds for $s=1$.
Suppose that  $x_{v}^{k,s}=0$ holds for $s =1,2, \ldots,s_0$ ($1\leq s_0 < \ell_k$).
Then we have $ x_v^{k,s_0} = 0$ for all $v \in \V(S^k)$.
This, together with \eqref{onlyonenode}, implies
\begin{equation}
	\label{tmpineq4}
	\sum_{v \in \V \backslash \V(S^k)} x_v^{k,s_0} = 1.
\end{equation}
By \eqref{SFCineq3}, we have 
\begin{equation}
	x_{v_0}^{k,s_0} +  x_{v}^{k,s_0+1} \leq 1,\rev{~\forall~v \in \V(v_0),~\forall~v_0 \in \V\backslash \V(S^k)}.
\end{equation}
Combining the above with  \eqref{tmpineq4} and the fact that $x_{v_0}^{k,s_0} \in \{0,1\}$ for all $v_0 \in \V \backslash \V(S^k)$ gives 
\begin{equation}\label{tmpineq}
	x_v^{k,s_0+1} = 0,~\forall~v \in  \bigcap_{v_0 \in \V\backslash \V(S^k)}\V(v_0). 
\end{equation} 	
By \cref{fact} (ii), $\V(S^k) \subseteq \V(v_0)$ must hold for all $v_0 \in \V\backslash \V(S^k)$.
Hence, we have 
$$
\V(S^k) \subseteq \bigcap_{v_0 \in \V\backslash \V(S^k)}\V(v_0).
$$
Together with \eqref{tmpineq}, this implies that 	
$x_{v}^{k,s_0+1} = 0 $ holds for all $v \in \V(S^k)$.

Next, we show that \eqref{SFCineq-1} holds for all $k \in \K$, $v \in \V(D^k)$, and $s \in \F^k$ at $\x \in \X^1$ by an induction of $s$.
By \eqref{SFCineq2}, $x_{v}^{k,s}=0$ holds for  $s=\ell_k$.
Suppose that $x_{v}^{k,s}=0$ holds for $s =s_0, s_0 +1, s_0 +2,~\ldots,\ell_k$, where $1 < s_0 \leq \ell_k$.
Then we have $ x_v^{k,s_0} = 0$ for all $v \in \V(D^k)$.
This, together with \eqref{onlyonenode}, implies
\begin{equation}
	\label{tmpineq5}
	\sum_{v \in \V \backslash \V(D^k)} x_v^{k,s_0} = 1.
\end{equation}
By \eqref{SFCineq3}, we have 
\begin{equation}
	\label{tmpineq5-1}
	x_{v_0}^{k,s_0-1} + x_v^{k,s_0} \leq 1,\rev{~\forall~v \in \V(v_0),~\forall~v_0 \in \V(D^k)}.
\end{equation}
From \cref{fact} (iii), $\V\backslash \V(D^k) \subseteq \V(v_0)$ holds for all $v_0 \in \V(D^k)$.
This, together with \eqref{tmpineq5}, \eqref{tmpineq5-1}, and the fact that $ x_{v_0}^{k,s_0} \in \{0,1\}$ for all $v_0 \in \V \backslash \V(D^k)$, implies that  $x_{v_0}^{k,s_0-1}=0$ for all $v_0 \in \V(D^k)$.

Finally, we show that all inequalities in \eqref{SFCineq-3} hold at $\x \in \X^1$.
By \eqref{onlyonenode}, we have
\begin{align}
	& \sum_{v\in \V}x^{k,s+1}_v = 1,	\label{tmpineq7}
\end{align}
where $1 \leq s < \ell_k$.
For $v_0 \in \V$, by $x_{v_0}^{k,s} \in \{0,1\}$ and $x_v^{k,s+1} \in \{0,1\}$ for $v \in \V(v_0)$, 	\eqref{SFCineq3}, and \eqref{tmpineq7}, 
we have 
\begin{equation}
	\label{tmpineq8}
	x_{v_0}^{k,s} + \sum_{v \in \V(v_0)} x^{k,s+1}_v \leq 1.
\end{equation}
Combining \eqref{tmpineq7} and \eqref{tmpineq8} further yields 
\begin{equation}
	x_{v_0}^{k,s} \leq \sum_{v \in \V\backslash \V(v_0)} x^{k,s+1}_v.
\end{equation}
Similarly, for any $v' \in \V\backslash \V(v_0)$, we can show
\begin{equation}\label{tmpineq9}
	x_{v'}^{k,s} \leq \sum_{v \in \V\backslash \V(v')} x^{k,s+1}_v \leq \sum_{v \in \V\backslash \V(v_0)} x_v^{k,s+1},
\end{equation}
where the last inequality follows from the fact that $\V(v_0) \subseteq \V(v')$ for all  $v' \in \V\backslash \V(v_0)$, as stated in \cref{fact} (i).
Now we can use $\sum_{v' \in \V} x_{v'}^{k,s} = 1$ in \eqref{onlyonenode}, \eqref{tmpineq9}, and the binary natures of all variables $x_{v'}^{k,s} \in \{0,1\}$ to conclude the desirable inequalities in \eqref{SFCineq-3}.

\section{Details of \cref{example1}}
\label{appendixA}
In this part, we show that for the toy example in  \cref{example1}, solving the \LP relaxation of problem \eqref{vnf} with \eqref{SFCineq-1} and \eqref{SFCineq-3} will return a solution with the objective value being $\frac{1}{4}$, and 
solving the \LP relaxation of problem \eqref{vnf} with \eqref{SFCineq1}--\eqref{SFCineq3} or \eqref{SFCineq-1}--\eqref{SFCineq-2} will return a solution with the objective value being  $\frac{1}{6}$.

In this toy example, since there is only a single service, we let $x_{v}^s\in \{0,1\}$ denote whether function $f_s$ is allocated to cloud node $v$, where $v \in \{1,2,3\}$ and $s \in \{1,2\}$.
Since the power consumption of activating cloud nodes is zero, we can, without loss of generality, set $y_v=1$ in problem \eqref{vnf}, and hence all constraints in \eqref{xyxelation} are redundant. 
In this case, the constraints in \eqref{onlyonenode} reduce to 
\begin{align}
	& x_1^1 + x_2^1 + x_3^1 = 1,\label{cons1}\\
	& x_1^2 + x_2^2 + x_3^2 = 1.\label{cons2}
\end{align}
As $\mu_1=\mu_2 = +\infty$, $\mu_3=3$, and $y_1=y_2=y_3=1$, the only effective constraint in \eqref{nodecapcons} is
\begin{equation}
	2x_3^1 + 2x_3^2 \leq 3. \label{cons3}
\end{equation}
It is simple to check that $\V(S)=\V(D)=\emptyset$, $\V(1) = \emptyset$, $\V(2) = \{1\}$, and $\V(3)=\{1,2\}$, according to their definitions in \eqref{VSKdef}--\eqref{v0def}. 
Since $\V(S)=\V(D)=\emptyset$, constraints in \eqref{SFCineq1}, \eqref{SFCineq2}, and \eqref{SFCineq-1} do not exist.
As there are only two functions in the SFC of the service, constraints \eqref{SFCineq3} and \eqref{SFCineq-2} are the same (meaning that problem \eqref{vnf} with constraints \eqref{SFCineq1}--\eqref{SFCineq3} and that with \eqref{SFCineq-1}--\eqref{SFCineq-2} are also equivalent), which are 
\begin{align}
	& x_2^1 + x_1^2 \leq 1, \label{exineq1}\\
	& x_3^1 + x_1^2 \leq 1, \label{exineq2}\\
	& x_3^1 + x_2^2 \leq 1.\label{exineq3}
\end{align}  
Finally, all constraints in \eqref{SFCineq-3} in this toy example read
\begin{align}
	& x_1^1 + x_2^1 + x_3^1 \leq x_1^2 + x_2^2 + x_3^2, \label{exineq4}\\
	& x_2^1 + x_3^1 \leq x_2^2 + x_3^2, \label{exineq5}\\
	& x_3^1 \leq  x_3^2.\label{exineq6}
\end{align}
Therefore, the LP relaxation problem \eqref{vnf} with \eqref{SFCineq1}--\eqref{SFCineq3} or \eqref{SFCineq-1}--\eqref{SFCineq-2}  can be presented as 
	\begin{align}
		z_{\LP_1}=\revv{\minimize_{\x}} ~&  x_1^1 + x_2^1 \nonumber\\
		{\revv{\text{subject to}}~} & \eqref{cons1}\text{--}\eqref{exineq3},\label{exLP1}\\
		& x_v^s \in [0,1], ~\forall~v \in \{1,2,3\}, ~\forall~s \in \{1,2\},\nonumber
	\end{align}
and
the LP relaxation problem \eqref{vnf} with \eqref{SFCineq-1} and \eqref{SFCineq-3} can be presented as 
\begin{align}
	z_{\LP_2}=\revv{\minimize_{\x}} ~&  x_1^1 + x_2^1 \nonumber\\
		{\revv{\text{subject to}}~} & \eqref{cons1}\text{--}\eqref{cons3},~ \eqref{exineq4}\text{--}\eqref{exineq6},	\label{exLP2}\\
		& x_v^s \in [0,1], ~\forall~v \in \{1,2,3\}, ~\forall~s \in \{1,2\}. \nonumber
\end{align}
Below we show that $z_{\LP_1}=\frac{1}{6}$ and $z_{\LP_2}=\frac{1}{4}$, separately.\vspace{0.1cm}\\
{\noindent$\bullet$~Proof of $z_{\LP_1}=\frac{1}{6}$.\vspace{0.1cm}\\}
Let $\bar{\x}$ be given as 
\begin{align*}
	& \bar{x}_1^1 = \frac{1}{6},~\bar{x}_2^1 = 0,~\bar{x}_3^1 = \frac{5}{6}, ~ \bar{x}_1^2 = \frac{1}{6},~\bar{x}_2^2 =  \frac{1}{6},~\bar{x}_3^2 = \frac{2}{3}. 
\end{align*}
It is simple to check that $\bar{\x}$  is feasible to problem \eqref{exLP1}, and thus $z_{\LP_1}\leq \frac{1}{6}$.
It remains to show $x_1^1 + x_2^1 \geq \frac{1}{6}$ for any feasible solution $\x$ of \rev{problem} \eqref{exLP1}.
By carefully combining \eqref{cons3}, \eqref{exineq2}, and \eqref{exineq3}, we have 
\begin{equation*}
	\begin{aligned}
	7 & \geq (2x_3^1 + 2x_3^2)+  2(x_3^1 + x_1^2) + 2(x_3^1 + x_2^2) \\
	&  = 6x_3^1 + 2(x_1^2+x_2^2+x_3^2)= 6x_3^1 + 2,
	\end{aligned}
\end{equation*}
where the last equality follows from \eqref{cons2}.
Hence, we have $x_3^1 \leq \frac{5}{6}$, which, together with \eqref{cons1}, implies $x_1^1 + x_2^1 \geq \frac{1}{6}$. 
\vspace{0.1cm}\\
{\noindent$\bullet$~Proof of $z_{\LP_2}=\frac{1}{4}$.\vspace{0.1cm}\\}
Let $\hat{\x}$ be given as 
\begin{align*}
	& \hat{x}_1^1 = \frac{1}{4},~\hat{x}_2^1 = 0,~\hat{x}_3^1 = \frac{3}{4}, ~ \hat{x}_1^2 = \frac{1}{4},~\hat{x}_2^2 = 0,~\hat{x}_3^2 = \frac{3}{4}.
\end{align*}
It is simple to check that $\hat{\x}$  is feasible to problem \eqref{exLP2}, and thus $z_{\LP_2}\leq \frac{1}{4}$.
It remains to show $x_1^1 + x_2^1 \geq \frac{1}{4}$ for any feasible solution $\x$ of \rev{problem} \eqref{exLP2}.
By \eqref{cons3} and \eqref{exineq6}, we have $3 \geq 2x_{3}^1+ 2x_{3}^2 \geq  2x_{3}^1+  2x_{3}^1=4x_{3}^1$ and thus $x_3^1 \leq \frac{3}{4}$, 
which, together with \eqref{cons1}, implies $x_1^1 + x_2^1 \geq \frac{1}{4}$.

\def\longreference{0}
\ifthenelse{\longreference = 1}{
\newcommand{\JCST}{{IEEE communications surveys \& tutorials}\xspace}
\newcommand{\JTNSM}{{IEEE Transactions on Network and Service Management}\xspace}
\newcommand{\JTN}{{IEEE/ACM Transactions on Networking}\xspace}
\newcommand{\JSAC}{IEEE Journal on Selected Areas in Communications\xspace}
\newcommand{\JTSP}{IEEE Transactions on Signal Processing\xspace}
\newcommand{\JCCR}{Computer Communication Review\xspace}
\newcommand{\JCE}{Chinese Journal of Electronics\xspace}
\newcommand{\JTWC}{IEEE Transactions on Wireless Communications\xspace}
\newcommand{\JTMC}{IEEE Transactions on Mobile Computing\xspace}
\newcommand{\JIJCS}{International Journal of Communication Systems\xspace}
\newcommand{\JOR}{Operations Research\xspace}
\newcommand{\JMP}{Mathematical Programming\xspace}
\newcommand{\JICM}{IEEE Communications Magazine\xspace}
\newcommand{\JICST}{IEEE Communications Surveys \& Tutorials}

\newcommand{\PROC}{Proceedings of\xspace}

\newcommand{\JAN}{{January}\xspace}
\newcommand{\FEB}{Feburary\xspace}
\newcommand{\MAR}{March\xspace}
\newcommand{\APR}{April\xspace}
\newcommand{\MAY}{May\xspace}
\newcommand{\JUN}{June\xspace}
\newcommand{\JUL}{July\xspace}
\newcommand{\AUG}{August\xspace}
\newcommand{\SEP}{September\xspace}
\newcommand{\OCT}{October\xspace}
\newcommand{\NOV}{November\xspace}
\newcommand{\DEC}{December\xspace}
}
{
\newcommand{\JCST}{{IEEE Commun. Surv. Tut.}\xspace}
\newcommand{\JTNSM}{{IEEE Trans. Netw. Service Manag.}\xspace}
\newcommand{\JTN}{{IEEE/ACM Trans. Netw.\xspace}}
\newcommand{\JSAC}{IEEE J. Sel. Areas Commun.\xspace}
\newcommand{\JTSP}{IEEE Trans. Signal Process.\xspace}
\newcommand{\JCCR}{Comput. Commun. Rev.\xspace}
\newcommand{\JCE}{Chin. J. Electron.\xspace}
\newcommand{\JTWC}{IEEE Trans. Wirel. Commun.\xspace}
\newcommand{\JTMC}{IEEE Trans. Mob. Comput.\xspace}
\newcommand{\JIJCS}{Int. J. Commun. Syst.\xspace}
\newcommand{\JOR}{Oper. Res.\xspace}
\newcommand{\JMP}{Math. Program.\xspace}
\newcommand{\JICM}{IEEE Commun. Mag.\xspace}
\newcommand{\JICST}{IEEE Commun. Surv. Tutor.}

\newcommand{\PROC}{Proc.\xspace}

\newcommand{\JAN}{Jan.\xspace}
\newcommand{\FEB}{Feb.\xspace}
\newcommand{\MAR}{Mar.\xspace}
\newcommand{\APR}{Apr.\xspace}
\newcommand{\MAY}{May\xspace}
\newcommand{\JUN}{Jun.\xspace}
\newcommand{\JUL}{Jul.\xspace}
\newcommand{\AUG}{Aug.\xspace}
\newcommand{\SEP}{Sep.\xspace}
\newcommand{\OCT}{Oct.\xspace}
\newcommand{\NOV}{Nov.\xspace}
\newcommand{\DEC}{Dec.\xspace}

}

\bibliographystyle{IEEEtran}

\end{document}